\theoremstyle{definition}
\newtheorem{theorem}{Theorem}
\newtheorem{corollary}[theorem]{Corollary}
\newtheorem{proposition}[theorem]{Proposition}
\newtheorem{lemma}[theorem]{Lemma}
\newtheorem{definition}[theorem]{Definition}
\newtheorem{example}[theorem]{Example}
\newtheorem{notation}[theorem]{Notation}
\newtheorem{remark}[theorem]{Remark}
\newcommand{\numberset}{\mathbb}
\newcommand{\N}{\numberset{N}}
\newcommand{\Z}{\numberset{Z}}
\newcommand{\R}{\numberset{R}}
\newcommand{\C}{\numberset{C}}
\newcommand{\F}{\numberset{F}}
\newcommand{\mC}{\mathcal{C}}
\newcommand{\mP}{\mathcal{P}}
\newcommand{\mL}{\mathcal{L}}
\newcommand{\mA}{\mathcal{A}}
\newcommand{\mD}{\mathcal{D}}
\newcommand{\mQ}{\mathcal{Q}}
\newlength{\dhatheight}
\newcommand{\doublehat}[1]{%
\settoheight{\dhatheight}{\ensuremath{\hat{#1}}}%
\addtolength{\dhatheight}{-0.37ex}%
\hat{\vphantom{\rule{2pt}{\dhatheight}}%
\smash{\hat{#1}}}}
\title{\textbf{Duality of Codes Supported on    
Regular Lattices, with an
Application to Enumerative Combinatorics}}
\author{Alberto Ravagnani\thanks{E-mail:
\texttt{ravagnani@ece.utoronto.ca}. The author
was partially supported by the Swiss National Science
Foundation through grant no. 200021\_150207. \\  
MSC subject classification: 11T71, 05A15, 06B05, 20K01.}}
 \affil{Institut de Math\'{e}matiques \\ Universit\'{e} de
Neuch\^{a}tel \\ Emile-Argand 11, CH-2000 Neuch\^{a}tel, Switzerland}
\date{}
\begin{document}

\maketitle

\begin{abstract}
We introduce a general class of regular weight functions on finite abelian groups, 
and study the combinatorics, the duality theory, and the metric properties of codes 
endowed with such functions. The weights are obtained by 
composing a suitable support map with the rank function of a graded lattice satisfying certain 
regularity properties. 
A regular weight on a group canonically induces a regular weight on the character group, and  
invertible MacWilliams identities always hold for such a pair of weights. 
Moreover, the  Krawtchouk coefficients of the corresponding MacWilliams transformation have a  
precise combinatorial significance, and can be expressed in terms of the invariants of the underlying
 lattice. In particular, they are easy to compute in many
examples. Several weight functions traditionally studied in Coding Theory belong to the class 
of weights introduced in this paper. Our lattice-theory approach also offers a control on metric 
structures that a regular weight 
induces on the underlying group. In particular, it allows to show that every finite abelian group admits 
weight functions that, simultaneously, 
give rise to MacWilliams identities, and endow the underlying group with a metric space structure. 
We propose a general notion of extremality for (not necessarily additive) 
codes in groups endowed with  
semi-regular supports, and establish a  
 Singleton-type bound. We then investigate the combinatorics
and duality theory of extremal codes, extending
 classical results on the weight and distance distribution of 
error-correcting codes. Finally, we apply the theory of MacWilliams identities 
 to enumerative combinatorics problems, obtaining closed formul{\ae} for the number of rectangular matrices 
over a finite having prescribed rank and satisfying some linear conditions. 
\end{abstract}

\vspace{1.0em}

\section*{Introduction and motivations} \label{intr} 
 In Coding Theory, a MacWilliams identity expresses a linear
 transformation between the partition enumerator
 of a code and the  partition enumerator of its dual code. MacWilliams
 identities are named after
 F. J. MacWilliams, who first discovered in \cite{ooo} relations of this type for linear codes endowed 
 with the Hamming weight. Analogous identities were later established for other classes
 of codes, most notably for  codes in groups (see \cite{by, forney, heide1, MS, zino0, zino} 
 and the references therein). 

An additive code
$\mC\subseteq G$ 
is a subgroup of a finite abelian group $G$, and its dual code $\mC^* \subseteq \hat{G}$ is
 its character-theoretic annihilator. Codes and dual codes are
 subsets of different ambient spaces, and therefore their enumerators
 refer in general to different partitions, say $\mP$ and $\mQ$,
on $G$ and $\hat{G}$, respectively. When 
$\mP$ and $\mQ$ are ``mutually compatible'', the 
$\mP$-distribution of a code $\mC$ and the $\mQ$-distribution of the dual code $\mC^*$
determine each other. The linear relations between the partition distributions are 
expressed by certain complex numbers called Krawtchouk coefficients. Their existence is
guaranteed by the compatibility property of the partitions, but giving explicit formul{\ae}
for them is difficult in general.

Most general works on MacWilliams identities for codes in groups focus 
on group partitions and their duals, as these determine 
the existence of a MacWilliams transformation.  More precisely, it is known that 
a partition $\mP$ of a finite abelian group $G$
induces a dual partition $\hat{\mP}$ of the character group $\hat{G}$.
Under a certain ``compatibility'' assumption (called Fourier-reflexivity), the partition 
enumerators of a code and of its dual code associated to  
$\mP$ and $\hat{\mP}$, respectively, determine each other via a MacWilliams 
transformation (see \cite{heide1,zino0,zino} for details). Compatible pairs were  
proposed in~\cite{zino} with the goal of simplifying the construction 
of abelian association schemes on groups.

In Coding Theory however, partitions of the ambient space are generally induced by  
 weight functions having an information-theoretic significance. 
 The main property that usually allows error correction 
is the \textit{weight} function defined on the ambient group, rather 
than the \textit{partition} it induces.
A given group partition can be induced  by many different weights, which in general  
will not have good properties from a Coding Theory viewpoint. 
For example, they will not endow the underlying group with a metric structure. 
Weight functions and induced partitions are therefore not 
equivalent information in general.

In this paper we focus our attention on weight functions, and address
 the problem of constructing \textit{general} families of \textit{numerical} 
 weights on finite abelian groups (rather than partitions) that yield MacWilliams 
 identities. We achieve this combining lattice theory and discrete Fourier 
 analysis methods, introducing a general notion of support map. 
One of the advantages of our approach is that it offers a control (via lattice modularity) 
on metric space structures induced by the weight function on the ambient group.  
 In a second part of this work we investigate how the theory of weights on groups relate to codes'
extremality. We close the paper with a section devoted to 
enumerative combinatorics problems on matrices.
More in detail, this paper makes the following contributions.

In Section \ref{secm} we define a regular support $\sigma$ as a function 
on a finite abelian group 
$G$
that takes values in a graded lattice $\mL$ with certain regularity properties. A regular support 
naturally induces a weight function on 
$G$ via the rank function of $\mL$. 
We show that a regular support $\sigma$ on $G$ \textit{canonically} induces a dual regular 
support $\sigma^*$ on the character group $\hat{G}$, with values in the dual lattice $\mL^*$. 
This defines in particular a {canonical} and numerical weight on $\hat{G}$ via the rank function 
of $\mL^*$. In contrast to previous works on codes in groups, our approach concentrates on 
numerical weights on the groups $G$ and $\hat{G}$, rather than on partitions of them. 

In Section \ref{secmw} we show that the weight functions induced by a regular support and its 
dual support, respectively, always obey a MacWilliams identity. The result relies on the 
specific notion of dual support, as the definition of regular support on a group does not depend 
on the algebraic structure of the corresponding character group. 
The Krawtchouk coefficients of the MacWilliams 
transformation are integers
with a precise combinatorial significance. More precisely, they can be expressed in terms of the 
combinatorial invariants of the  support lattice $\mL$.

 As a secondary result, in Section \ref{recover} we revisit the theory of MacWilliams identities 
associated with many weights traditionally studied in Coding Theory, showing that such weight 
functions factor through a suitable support map having  remarkable regularity
 properties.

In Section \ref{secmetric} we show that when the lattice associated with a regular support is modular, then the underlying group can be naturally endowed with a metric space structure.  We then prove that one can systematically construct (over any finite abelian group) numerical weight functions
that, \textit{simultaneously}, yield MacWilliams identities, and endow the ambient 
group with a metric structure. This is particularly interesting from a Coding Theory perspective, as the triangle inequality is usually a key property enabling error correction.

In Section \ref{secop} we propose a general notion of extremality for codes in groups endowed with semi-regular support functions, establishing a generalized Singleton bound in this context. Our definition of extremality only relies on codes' cardinality, which is a fundamental code parameter from an information-theoretic viewpoint. This differs from previous general approaches, where extremal codes are defined in terms of their interaction with the dual code via algebraic properties of their inner distributions.  We also study the combinatorics and duality theory of the codes attaining the Singleton-type bound, extending classical results to the general framework of codes in groups. 

In Section \ref{enmat} the theory of MacWilliams identities is applied to enumerative combinatorics problems, deriving explicit formul{\ae} for the number of $k \times m$ matrices over $\F_q$ having prescribed rank and satisfying certain linear conditions (e.g., having zeroes in a given set of diagonal entries). In particular, we answer a generalized question of R. Stanley.


\section{Groups, Codes, and Weight Functions} \label{sec3}

Let $(G,+)$ be an abelian group. The \textbf{character group} of $G$,
denoted by $(\hat{G},\cdot)$, is the set of group homomorphisms $\chi:G
\to \C^*=\C \setminus \{0\}$ endowed with pointwise multiplication, i.e., for
$\chi_1, \chi_2 \in \hat{G}$,
$$(\chi_1 \cdot \chi_2)(g):=\chi_1(g) \chi_2(g), \ \ \ \mbox{ for all } g \in
G.$$ The neutral element of $(\hat{G},\cdot)$ is the  \textbf{trivial character}
$\varepsilon \equiv 1$ of $G$.
The
groups 
$G$ and $\doublehat{G}$ are canonically isomorphic via the map
$\psi: G \to \doublehat{G}$ defined, for $g \in G$, by $\psi(g)(\chi):=\chi(g)$
for all
$\chi \in \hat{G}$. It is well-known that when $(G,+)$ is finite and abelian
the groups
$(G,+)$ and
$(\hat{G},\cdot)$ are isomorphic, not canonically in general. In particular,
$|G|=|\hat{G}|$. Notice  that for all $n \ge 1$ we have
$\widehat{G^n}=\hat{G}^n$, where $(\chi_1,...,\chi_n) \in \widehat{G^n}$ is 
defined,
for all $(g_1,...,g_n) \in G^n$,
by
 $$(\chi_1,...,\chi_n)(g_1,...,g_n):= \prod_{i=1}^n \chi_i(g_i).$$

\begin{definition}
Let $G$ be a finite abelian group. A
\textbf{code} in $G$ is a subgroup 
$\mC \subseteq G$.  The \textbf{dual} of 
$\mC$
 is the code  $\mC^*:=\{ \chi \in \hat{G} :
\chi(g)=1 
\mbox{ for all } g \in \mC\} \subseteq \hat{G}$.
We say that $\mC$ is \textbf{trivial} if $\mC=\{0\}$
or $\mC=G$. The code
\textbf{generated} by codes $\mC,\mD \subseteq G$ is the
 code  $\mC+\mD:= \{ c+d : c \in \mC, \ d \in \mD\}
\subseteq G$.
\end{definition}

The following remark summarizes some properties of duality.
The proof is left to the reader.

\begin{remark} \label{cardual}
 Let
$\mC \subseteq G$ be a code. Then  
$|\mC| \cdot |\mC^*|=|G|=|\hat{G}|$. Moreover, identifying $G$ and
$\doublehat{G}$  we have $\mC^{* *}=\mC$. Finally, duality and sum of
codes relate as follows.
\begin{enumerate}
  \item Let $\mC,\mD \subseteq G$ be   codes.
Then $|\mC + \mD| \times |\mC \cap \mD| =|\mC| \cdot |\mD|$. \label{dimsuma}

\item Let $\mC_1,...,\mC_t \subseteq G$ be   codes, $t \ge 2$.
We have $\bigcap_{i=1}^t \mC_i^* = \left( \sum_{i=1}^t \mC_i
\right)^*$. \label{dimsumb}
 \end{enumerate}
\end{remark}

\begin{definition}
 Let $G$ be a finite abelian group. A \textbf{weight} on $G$ is a
function
$\omega :G \to X$, where $X$ is a finite non-empty set. 
The \textbf{$\omega$-distribution} of a code
$\mC \subseteq G$ is the integer vector
$(W_a(\mC,\omega) : a \in X)$, where $W_a(\mC,\omega):=|\{ g \in \mC : \omega(g)=a\}|$
for all $a \in X$.

A weight function on a group naturally induces a partition of it as follows.
\begin{definition} \label{equiva}
 Let $\omega:G \to X$ be a weight.
For all $a\in \omega(G)$ define $P_a(\omega):=\{ g \in G : \omega(g)=a\}$.
Then
$$\mP(\omega):= \bigsqcup_{a \in \omega(G)} P_a(\omega)$$ is 
the \textbf{partition} of
$G$ 
induced
by $\omega$. We say that weight functions $\omega:G \to X$ and $\omega':G \to
X'$ are
\textbf{equivalent} if $\mP(\omega)=\mP(\omega')$, and in this case we write
$\omega \sim \omega'$.
\end{definition}

Let 
$\omega:G \to X$ and $\tau:\hat{G} \to Y$ be weights. We say that
$(\omega,\tau)$ is a \textbf{compatible pair} if for all $b \in \tau(\hat{G})$ and 
for all $g,g' \in G$ with $\omega(g)=\omega(g')$
one has
$$\sum_{\chi \in P_b(\tau)} \chi(g) \ \ = \sum_{\chi \in P_b(\tau)} \chi(g').$$
If this is the case, then
the \textbf{Krawtchouk
coefficients} associated to $(\omega,\tau)$ are defined, for every
 $a \in \omega(G)$ and $b \in \tau(\hat{G})$,  by 
$$K(\omega,\tau)(a,b):=\sum_{\chi \in P_b(\tau)}
\chi(g),$$
where $g \in G$ is any element with $\omega(g)=a$.
When $a \notin \omega(G)$ or $b \notin \tau(\hat{G})$ we put
$K(\omega,\tau)(a,b):=0$.
\end{definition}

\begin{remark} \label{dualkk}
 Let $\omega:G \to X$, $\tau:\hat{G} \to Y$ be weights.
Identifying $G$ and $\doublehat{G}$ one has $g(\chi)=\chi(g)$ for all
$g \in G$ and $\chi \in \hat{G}$. Thus when $(\tau,\omega)$ is a compatible
pair the Krawtchouk
coefficients associated to $(\tau,\omega)$ are defined, for every 
$a \in \tau(\hat{G})$ and $b \in \omega(G)$, by
$$K(\tau,\omega)(a,b)=\sum_{g \in P_b(\omega)} \chi(g),$$
where $\chi \in \hat{G}$ is any character with $\tau(\chi)=a$.
Again, if  $a \notin \tau(\hat{G})$ or $b \notin \omega(G)$ then 
$K(\tau,\omega)(a,b)=0$.
\end{remark}

\begin{remark} \label{distf}
 Let $\omega:G \to X$, $\omega':G \to X'$, $\tau:\hat{G} \to Y$ and 
 $\tau':\hat{G} \to Y'$ be
weights with
$\omega \sim \omega'$ and $\tau \sim \tau'$. There exist bijections
$\pi:\omega'(G) \to \omega(G)$ and $\eta:\tau'(\hat{G}) \to \tau(\hat{G})$ such
that $\omega=\pi \circ \omega'$ and $\tau=\eta \circ \tau'$.
Moreover, it is easy to see that if $(\omega,\tau)$
is a compatible pair, 
then $(\omega',\tau')$ is also a compatible pair, and
for all $a \in \omega'(G)$ and 
$b \in \tau'(\hat{G})$ one has
$$K(\omega',\tau')(a,b)=K(\omega,\tau)(\pi(a),\eta(b)).$$
Therefore the Krawtchouk
coefficients
associated to $(\omega',\tau')$ are essentially the same as the Krawtchouk
coefficients
associated to $(\omega,\tau)$, up to a suitable bijection. For this reason most
authors concentrate on group partitions 
when studying Krawtchouk coefficients in the context of additive codes.

In Coding Theory however, given a ``numerical'' weight function $\omega:G \to
X \subseteq \N$, one
naturally attempts to define a distance $d_\omega$ on $G$
by setting
$d_\omega(g,g'):=\omega(g-g')$ for all $g,g' \in 
G$. This is usually crucial for error correction.
It is easy to
construct groups $G$ and weights
$\omega,\omega':G \to X \subseteq \N$ such that
$\omega \sim \omega'$, $d_\omega$ is a distance function, but
$d_{\omega'}$ is not. This is the reason why in this work we  concentrate 
on numerical weights, rather than on group partitions. 
\end{remark}

It is well known~\cite[Theorem 1]{zino0} that compatible pairs of 
weights yield MacWilliams-type identities as follows.

\begin{theorem}[MacWilliams Identities]\label{mwwide}
 Let $G$ be a finite abelian group, and let
$\omega:G \to X$ and $\tau:\hat{G} \to Y$ be weights.
Assume that $(\omega,\tau)$ is compatible. Then
for all codes $\mC \subseteq G$  we have
$$W_b(\mC^*,\tau)=\frac{1}{|\mC|}\sum_{a \in X} K(\omega,\tau)(a,b)
 \ W_a(\mC,\omega).$$
for all $b \in Y$. In particular, the $\omega$-distribution of $\mC$
determines the
$\tau$-distribution of $\mC^*$.
\end{theorem}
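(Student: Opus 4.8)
The plan is to compute the right-hand side directly and transform it into the left-hand side using the orthogonality relations for characters of subgroups. First I would dispose of the degenerate case: if $b \notin \tau(\hat{G})$, then $W_b(\mC^*,\tau)=0$ (no character of $\mC^*$ has $\tau$-value $b$) and $K(\omega,\tau)(a,b)=0$ for every $a\in X$ by definition, so both sides vanish and the identity holds trivially. Hence I may assume $b \in \tau(\hat{G})$ from now on.

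Next I would rewrite the sum on the right. Since $W_a(\mC,\omega)=0$ unless $a\in\omega(\mC)\subseteq\omega(G)$, and $K(\omega,\tau)(a,b)=0$ whenever $a\notin\omega(G)$, the sum over $a\in X$ collapses to a sum over $a\in\omega(\mC)$. Writing $W_a(\mC,\omega)=|\{g\in\mC:\omega(g)=a\}|$ and grouping the elements of $\mC$ according to their $\omega$-value gives
$$\sum_{a \in X} K(\omega,\tau)(a,b)\, W_a(\mC,\omega) = \sum_{g \in \mC} K(\omega,\tau)(\omega(g),b).$$
Now the compatibility hypothesis enters: by definition $K(\omega,\tau)(\omega(g),b)$ is the common value of $\sum_{\chi\in\hat{G},\,\tau(\chi)=b}\chi(g')$ taken over all $g'\in G$ with $\omega(g')=\omega(g)$, and $g$ itself is such a $g'$; therefore $K(\omega,\tau)(\omega(g),b)=\sum_{\chi\in\hat{G},\,\tau(\chi)=b}\chi(g)$. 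Substituting this and interchanging the two finite sums yields
$$\sum_{a \in X} K(\omega,\tau)(a,b)\, W_a(\mC,\omega) = \sum_{\substack{\chi\in\hat{G} \\ \tau(\chi)=b}} \ \sum_{g\in\mC} \chi(g).$$

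Finally I would invoke the standard orthogonality fact that for a subgroup $\mC\subseteq G$ and a character $\chi\in\hat{G}$ one has $\sum_{g\in\mC}\chi(g)=|\mC|$ if $\chi\in\mC^*$ and $\sum_{g\in\mC}\chi(g)=0$ otherwise. (This is proved by observing that $\chi$ restricted to $\mC$ is a character of $\mC$; if it is nontrivial there is $g_0\in\mC$ with $\chi(g_0)\neq 1$, and reindexing the sum by $g\mapsto g+g_0$ gives $\chi(g_0)\sum_{g\in\mC}\chi(g)=\sum_{g\in\mC}\chi(g)$, forcing the sum to be $0$.) Consequently the inner sum survives only for $\chi\in\mC^*$ with $\tau(\chi)=b$, each such $\chi$ contributing $|\mC|$, so the whole expression equals $|\mC|\cdot|\{\chi\in\mC^*:\tau(\chi)=b\}| = |\mC|\cdot W_b(\mC^*,\tau)$. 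Dividing by $|\mC|$ gives the asserted identity, and the final clause about the $\omega$-distribution determining the $\tau$-distribution is then immediate.

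The proof is short, and the only points requiring genuine care are the bookkeeping between the formal index set $X$ and the actual image $\omega(\mC)$ (so that truncating the sum and regrouping by $\omega$-fibres is legitimate), and the precise invocation of compatibility that lets one replace the Krawtchouk coefficient $K(\omega,\tau)(\omega(g),b)$ by the character sum evaluated at $g$ itself rather than at some other representative of the same $\omega$-fibre. The character-orthogonality step, although it is the engine of the argument, is entirely routine.
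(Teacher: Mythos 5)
Your proof is correct, but it takes a different route from the paper. The paper does not argue directly at all: it observes that the partition $\mP(\omega)$ is finer than the dual partition $\widehat{\mP(\tau)}$ and then invokes Theorem 2.7 of the cited work of Gluesing-Luerssen on Fourier-reflexive partitions, so its ``proof'' is essentially a translation into the partition language plus a citation. You instead give a self-contained computation: truncate the sum to $a\in\omega(\mC)$, regroup by $\omega$-fibres to get $\sum_{g\in\mC}K(\omega,\tau)(\omega(g),b)$, use compatibility to replace each Krawtchouk coefficient by the character sum evaluated at $g$ itself, interchange the two finite sums, and finish with the orthogonality relation $\sum_{g\in\mC}\chi(g)=|\mC|$ if $\chi\in\mC^*$ and $0$ otherwise; your handling of the degenerate case $b\notin\tau(\hat{G})$ via the convention $K(\omega,\tau)(a,b)=0$ is also the right bookkeeping. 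What your argument buys is independence from the external reference (and it is, in substance, the standard Poisson-summation-style proof that underlies the cited theorem); what the paper's approach buys is brevity and an explicit placement of the result inside the general framework of Fourier-reflexive partitions, which the paper reuses later (e.g.\ in Remark \ref{frefl}).
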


\begin{proof}
 By the definition of compatible pair, the partition $\mP(\omega)$ 
 is finer than
 $\widehat{\mP(\tau)}$, the dual of the partition $\mP(\tau)$
 (see \cite[Definition 2.1]{heide1}). 
 The result now follows from \cite[Theorem 2.7]{heide1}.
\end{proof}

Notice that the identities of Theorem \ref{mwwide} express a linear 
transformation between the $\omega$-distribution of the code $\mC \subseteq G$
and the $\tau$-distribution of its dual code $\mC^* \subseteq \hat{G}$. 
The matrix of the linear transformation, $K(\omega,\tau)$, is called the 
\textbf{Krawtchouk matrix}. Its rows are indexed by 
$b \in Y$, and its columns are indexed by $a \in X$.

\begin{remark} \label{probs}
The fact that a pair  $(\omega,\tau)$ is compatible does not
imply  
that $(\tau,\omega)$ is compatible. 
This corresponds to the fact that the MacWilliams transformation is not
invertible.
The most interesting scenario is when both $(\omega,\tau)$ and
$(\tau,\omega)$ are compatible,  i.e., when $\omega$ and $\tau$ are
\textbf{mutually}
compatible. 
\end{remark}

We conclude this section by mentioning the product weight and the
symmetrized weight induced by a weight function. See \cite{heide1,zino0} 
for a more complete analysis.

\begin{definition}
 Let $\omega:G \to X$ be a weight, and let $n \ge 1$ be an integer. 
\begin{enumerate}
 \item The \textbf{product weight} on $G^n$ associated to $\omega$ is
the function
$\omega^n:G^n \to X^n$ defined, for all $(g_1,...,g_n)$, by 
$\omega^n(g_1,...,g_n):=(\omega(g_1),...,\omega(g_n))$.
\item Assume that $X=\{0,...,r\}$
and for all $(c_1,...,c_n) \in X^n$ let 
$\mbox{cmp}(c):=(e_0,...,e_r)$, where $e_i:=|\{ 1 \le j \le n : c_j=x_i\}|$
for all $0 \le i \le r$.
The \textbf{symmetrized weight} on $G^n$ associated to $\omega$ is the function
$\omega^n_{\textnormal{sym}}:G^n \to \{ 0,...,n\}^{r+1}$ defined, for all
$(g_1,...,g_n) \in G^n$, by 
$\omega^n_{\textnormal{sym}}(g_1,...,g_n):=\mbox{cmp}(\omega^n(g_1,...,g_n))$. 
\end{enumerate}
\end{definition}

Compatibility of pairs  is preserved by products and symmetrization. 

\begin{proposition} \label{ext}
Let $\omega:G \to X$ and $\tau:\hat{G} \to Y$ be weights. Let 
$n \ge 1$, $r:=|X|$ and $s:=|Y|$. Assume that
 $(\omega,\tau)$ is compatible. Write 
 $K=K(\omega,\tau)$ for ease of notation. The following hold.
 
 \begin{enumerate}
 \item The pair $(\omega^n,\tau^n)$ is compatible. Moreover,
for $a=(a_1,...,a_n) \in X^n$ and $b=(b_1,...,b_n)\in Y^n$
we have 
$$\displaystyle K({\omega^n,\tau^n})(a,b) \ = \ \prod_{j=1}^n
 K(a_j,b_j).$$
 
 \item Assume $X=\{0,...,r\}$ and $Y=\{0,...,s\}$.
 The pair $(\omega^n_{\textnormal{sym}},
\tau^n_{\textnormal{sym}})$,
is compatible. Moreover,
for $d=(d_0,...,d_r)\in \{1,...,n\}^{r+1}$ and 
$e \in \{1,...,n\}^{s+1}$ we
have
$$\displaystyle K({\omega^n_{\textnormal{sym}},\tau^n_{\textnormal{sym}}})(d,e)
 \ = \sum_{\substack{b \in 
Y^n \\ \textnormal{cmp}(b)=e}}
\prod_{j=1}^{d_0} K(0,b_j) \prod_{j=d_0+1}^{d_0+d_1}
K(1,b_j) \ \cdots \prod_{j=d_0 + \cdots + d_{r-1}+1}^{d_0 + \cdots + d_r} K(r,b_j).$$
 \end{enumerate}
 \end{proposition}

\begin{proof}
 Let $(a_1,...,a_n) \in \omega^n(G^n)$ and $(b_1,...,b_n) \in
\tau^n(\hat{G}^n)$. For any element $(g_1,...,g_n) \in G^n$
with
$\omega^n(g_1,...,g_n)=(a_1,...,a_n)$ one has
\begin{equation} \label{eqqq}
 \sum_{\substack{(\chi_1,...,\chi_n) \in \hat{G}^n \\
\tau^n(\chi_1,...,\chi_n)=(b_1,...,b_n)}}
(\chi_1,...,\chi_n)(g_1,...,g_n)=\prod_{j=1}^n K(a_j,b_j).
\end{equation}
This shows that $(\omega^n,\tau^n)$ is a compatible pair, and proves the
first formula in the statement. 
Now we study the symmetrized weight. Let $(d_0,...,d_r) \in
\omega^n_\textnormal{sym}(G^n)$ and $(e_0,...,e_s) \in
\tau^n_\textnormal{sym}(\hat{G}^n)$, and let $(g_1,...,g_n) \in G^n$
with $\omega^n_\textnormal{sym}(g_1,...,g_n)=(d_0,...,d_r)$. Using
(\ref{eqqq})
we compute
\begin{equation} \label{eqqqq}
 \sum_{\substack{(\chi_1,...,\chi_n) \in \hat{G}^n \\
\tau^n_{\textnormal{sym}}(\chi_1,...,\chi_n)=(e_0,...,e_s)}}
(\chi_1,...,\chi_n)(g_1,...,g_n)\ \ =\sum_{\substack{(b_1,...,b_n) \in Y^n \\
\textnormal{cmp}(b_1,...,b_n)=(e_0,...,e_s)}}\prod_{j=1}^n
K(a_j,b_j),
\end{equation}
where $(a_1,...,a_n):=\omega^n(g_1,...,g_n)$. Up to a permutation of the entries
of $(a_1,...,a_n)$, without loss of generality we may assume $a_i \le a_{i+1}$
for all $1 \le i \le n-1$.
Therefore (\ref{eqqqq}) becomes
$$\displaystyle \sum_{\substack{b \in 
Y^n \\ \textnormal{cmp}(b)=e}}
\prod_{j=1}^{d_0} K(0,b_j) \prod_{j=d_0+1}^{d_0+d_1}
K(1,b_j) \ \cdots \prod_{j=d_0 + \cdots + d_{r-1}+1}^{d_0 + \cdots + d_r} K(r,b_j).$$
The expression above only depends on $(d_0,...,d_r)$ and $(e_0,...,e_s)$. This 
shows that $(\omega^n_{\textnormal{sym}},
\tau^n_{\textnormal{sym}})$ is compatible, and proves the second
formula in the statement.
\end{proof}

 Proposition \ref{ext} shows  that 
the computation of the Krawtchouk
coefficients of the pairs  $(\omega^n,\tau^n)$ and
$(\omega^n_{\textnormal{sym}}, \tau^n_{\textnormal{sym}})$
reduces to the computation of the Krawtchouk coefficients of $(\omega,\tau)$. 

In the reminder of the paper we concentrate on numerical weight functions on 
finite abelian groups arising from lattices.


\section{Regular Lattices} \label{seclattic}

In this section we briefly recall some basic notions on posets and lattices,
and propose a definition of regular lattice.
See  \cite[Chapter 3]{ec} for a
general introduction to posets. Throughout this paper we only 
treat finite lattices.

Given a poset $(L, \le)$ and $S,T \in L$, we write $S < T$ for
$S \le T$ and $S \neq T$. We write $S \lessdot T$ if $S<T$ and
there is no $U \in L$ with $S<U<T$. In this case we say that $T$ \textbf{covers}
$S$. Recall moreover that a \textbf{meet} of $S,T \in L$ is a 
maximal lower bound for both $S$ and $T$. Similarly, a \textbf{join} of 
$S,T \in L$ is a 
minimal upper bound for both $S$ and $T$.

\begin{definition}
 A \textbf{lattice} is a poset $(L,\le)$ where every $S,T \in L$
have a unique meet
and a unique join, denoted by
$S \wedge T$ and $S\vee T$, respectively.
\end{definition}

Meet and join of a lattice $\mL=(L,\le,\wedge,\vee)$
 define two binary,
commutative and associative  operations
$\wedge, \vee:L \times L \to L$.
In particular, for any non-empty finite subset $M \subseteq L$, the lattice 
elements
$\bigwedge \{ S : S \in M\}$ and $\bigvee \{ S : S \in M\}$ are well defined.
When $\mL$ is \textbf{finite} (i.e., $L$ is finite), we set 
$0_\mL:= \bigwedge \{ S : S \in L\}$ and $1_\mL:= \bigvee \{ S : S \in L\}$.

A  finite lattice $\mL$ is \textbf{graded} of \textbf{rank} $r$ 
if all maximal chains (with respect to $\le$)
in $\mL$ have the same length $r$. We denote the rank of a graded lattice $\mL$
by
$\mbox{rk}(\mL)$.

\begin{remark} \label{rk_def}
 Let $\mL=(L,\le,\wedge,\vee)$ be a finite  graded lattice  of rank $r$.
 There exists a unique
function
$\rho_\mL:L \to \{0,...,r\}$, called the \textbf{rank
function}
of  $\mL$, with 
$\rho(0_\mL)=0$ and $\rho_\mL(T)=\rho_\mL(S)+1$ whenever $S \lessdot T$
(see \cite[page 281]{ec}). 
The function $\rho_\mL$ is monotonic, i.e., $\rho_\mL(S) \le
\rho_\mL(T)$
whenever $S \le T$. Moreover, $\rho_\mL(L)=\{0,...,r\}$, and $0_\mL$ 
and $1_\mL$ are the only
elements of rank $0$ and $r$, respectively.
\end{remark}

The \textbf{dual} of a lattice
$\mL=(L,
\le, \wedge, \vee)$ is the lattice
$\mL^*=(L, \preceq, \curlywedge, \curlyvee)$, where
$S \preceq T$ if and only if $T \le S$, $\curlywedge:= \vee$ and
$\curlyvee:= \wedge$.
 If $\mL$ is finite (and so $\mL^*$ is finite) then $0_{\mL^*}=1_\mL$
and
$1_{\mL^*}=0_\mL$. Clearly, $\mL^{**}=\mL$. Notice moreover that
 $\mL$ is graded if and only if $\mL$ is graded. If this is the case, then
$\mbox{rk}(\mL)=\mbox{rk}(\mL^*)$ and
$\rho_{\mL^*}(S)=\mbox{rk}(\mL)-\rho_\mL(S)$ for all $S \in L$.

\begin{definition}
 Let $\mL=(L, \le)$ be a finite poset. Then the \textbf{M\"{o}bius function} of 
$\mL$ is the map $\mu_\mL:\{(S,T) \in L \times L : S \le T \} \to
\Z$ inductively defined 
 by
$\mu_\mL(S,S)=1$ for all $S \in L$, and 
$$\mu_\mL(S,T)= -\sum_{S \le U<T} \mu_\mL(S,U) \ \ \mbox{ for all $S,T \in L$
with } S <T.$$
\end{definition}

Using  the fact that a lattice $\mL$ and its dual lattice $\mL^*$ 
    are anti-isomorphic, 
 one can show that
$\mu_{\mL^*}(S,T)=\mu_{\mL}(T,S)$ for all
$S,T \in \mL$ (see \cite[Proposition 2.1.10]{incal}).

Now we introduce regular lattices, which are central in our approach.

\begin{definition} \label{defmoreg}
 A finite graded lattice $\mL=(L,\le, \wedge, \vee)$
of rank $r$ is \textbf{regular} if the following hold.
\begin{enumerate}[label=(\alph*)] \setlength\itemsep{0.2em}
 \item For all \label{AAA}
$T \in L$ and for all integers $0 \le s \le r$,
\begin{itemize}  \setlength\itemsep{0.05em}
\item  the number of
$S \in L$ with $\rho_\mL(S)=s$ and $S \le T$ only depends on $s$ and
$\rho_\mL(T)$,
 \item  the number of
$S \in L$ with $\rho_\mL(S)=s$ and $T \le S$ only depends on $s$ and
$\rho_\mL(T)$.
\end{itemize}
\item For all $S,T \in L$ with $S \le T$, the M\"{o}bius 
function $\mu_\mL(S,T)$
only depends \label{BBB}
on $\rho_\mL(S)$
and $\rho_\mL(T)$.
\end{enumerate}
\end{definition}

A regular lattice is shown in Figure \ref{fig:1} via its Hasse diagram. More examples 
will be given 
in Section \ref{recover}. We also notice that property \ref{AAA} of Definition 
\ref{defmoreg} does not imply property \ref{BBB}.
For example, let $\mL$ be the lattice whose Hasse diagram is depicted in 
Figure \ref{fig:2}. Then $\mL$ satisfies property~\ref{AAA}, as one can easily check. However,
$\mu_\mL(S_1,T_1)=1$ and $\mu_\mL(S_2,T_1)=0$, violating 
property \ref{BBB}.

\begin{figure}[!htb]
    \centering
    \begin{minipage}{.5\textwidth}
        \centering
         \begin{tikzpicture}[scale=.6]
  \node (one) at (0,0) {$1$};
  \node (T1) at (-3,-1) {$T_1$};
    \node (T2) at (3,-1) {$T_2$};
  \node (S1) at (-3,-5) {$S_1$};
  \node (S2) at (3,-5) {$S_2$};
\node(zero) at (0,-6) {$0$};
\node(U1) at (-5,-3) {$U_1$};
\node(U2) at (-3,-3) {$U_2$};
\node(U3) at (-1,-3) {$U_3$};
\node(U4) at (1,-3) {$U_4$};
\node(U5) at (3,-3) {$U_5$};
\node(U6) at (5,-3) {$U_6$};
  \draw (one) -- (T1);
\draw  (one) -- (T2);
\draw  (T1) -- (U1);
\draw  (T1) -- (U2);
\draw  (T1) -- (U3);
\draw  (T2) -- (U4);
\draw  (T2) -- (U5);
\draw  (T2) -- (U6);
\draw  (U1) -- (S1);
\draw  (U2) -- (S1);
\draw  (U3) -- (S1);
\draw  (U4) -- (S2);
\draw  (U5) -- (S2);
\draw  (U6) -- (S2);
\draw (S1) -- (zero);
\draw (S2) -- (zero);
\end{tikzpicture}
        \caption{A regular lattice}
        \label{fig:1}
    \end{minipage}%
    \begin{minipage}{0.5\textwidth}
        \centering
         \begin{tikzpicture}[scale=.6]
  \node (one) at (0,0) {$1$};
  \node (T1) at (-3,-1) {$T_1$};
    \node (T2) at (3,-1) {$T_2$};
  \node (S1) at (-3,-5) {$S_1$};
  \node (S2) at (3,-5) {$S_2$};
\node(zero) at (0,-6) {$0$};
\node(U1) at (-5,-3) {$U_1$};
\node(U2) at (-3,-3) {$U_2$};
\node(U3) at (-1,-3) {$U_3$};
\node(U4) at (1,-3) {$U_4$};
\node(U5) at (3,-3) {$U_5$};
\node(U6) at (5,-3) {$U_6$};
  \draw (one) -- (T1);
\draw  (one) -- (T2);
\draw  (T1) -- (U1);
\draw  (T1) -- (U2);
\draw  (T1) -- (U4);
\draw  (T2) -- (U3);
\draw  (T2) -- (U5);
\draw  (T2) -- (U6);
\draw  (U1) -- (S1);
\draw  (U2) -- (S1);
\draw  (U3) -- (S1);
\draw  (U4) -- (S2);
\draw  (U5) -- (S2);
\draw  (U6) -- (S2);
\draw (S1) -- (zero);
\draw (S2) -- (zero);
\end{tikzpicture}
        \caption{A non-regular lattice}
        \label{fig:2}
    \end{minipage}
\end{figure}

We can now define the main combinatorial invariants of a regular lattice as follows.

\begin{notation}
 Let $\mL=(L,\le, \wedge, \vee)$ be a regular lattice of rank $r$.
For all integers $0 \le s,t \le r$ we define
$$\mu_\le(s,t):= |\{ S \in L : S \le T, \
\rho_\mL(S)=s \}| \ \ \ \ \ \mbox{and} \ \ \ \ \ 
\mu_\ge(s,t):= |\{ S \in L : T \le S, \ \rho_\mL(S)=s
\}|,$$
where $T \in L$ is any element with $\rho_\mL(T)=t$. For any given integers
$0 \le s \le t \le r$ we also
set   $$\mu_\mL(s,t):=\mu_\mL(S,T),$$ 
where $S,T \in L$ are arbitrary with $S \le T$, $\rho_\mL(S)=s$,
and $\rho_\mL(T)=t$. For $s>t$ we set $\mu_\mL(s,t):=0$.
\end{notation}

\begin{remark}
 A different notion of (semi)lattice regularity was proposed by
 Delsarte in \cite{del2}. The definition of Delsarte 
 is motivated by applications to Coding Theory via association schemes, rather than 
 Fourier analysis.
 Our approach and purposes are  
  different from those of \cite{del2}. For example, support maps, duality and 
  cardinality-related extremality notions are not treated in \cite{del2}.
 Notice moreover that in contrast to Delsarte's approach, in our setting the lattice structure
 is defined on an independent ``support space'', rather than on the ambient group. 
 \end{remark}

The following result easily follows from the definitions and from 
the properties of
the M\"{o}bius function. It expresses the parameters of the dual of
a regular lattice
 $\mL^*$ in terms of those of
 $\mL$.

\begin{proposition} \label{dualmre}
 Let $\mL=(L,\le,\wedge,\vee)$ be a regular lattice of rank $r$. Then
$\mL^*=(L, \preceq, \curlywedge, \curlyvee)$ is regular of rank $r$, and
for all $0 \le s,t \le r$ we have
$$\mu_\preceq(s,t)=\mu_\ge(r-s,r-t), \ \ \ \ \
\mu_\succeq(s,t)=\mu_\le(r-s,r-t), \ \ \mbox{ and } \ \ \mu_{\mL^*}(s,t)=
\mu_\mL(r-t,r-s).$$
\end{proposition}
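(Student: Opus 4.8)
The plan is to unwind the three claimed identities directly from the definitions, using the structural facts already recorded about $\mL^*$ in Section~\ref{seclattic}. Recall that $\mL^*=(L,\preceq,\curlywedge,\curlyvee)$ has the same underlying set $L$, that $\mL$ is graded iff $\mL^*$ is, that $\operatorname{rk}(\mL^*)=\operatorname{rk}(\mL)=r$, and that $\rho_{\mL^*}(S)=r-\rho_\mL(S)$ for all $S\in L$. The essential bookkeeping device throughout is this rank-reversal formula together with the order-reversal $S\preceq T \iff T\le S$.

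First I would verify that $\mL^*$ is regular, i.e.\ that it satisfies conditions \ref{AAA} and \ref{BBB} of Definition~\ref{defmoreg}. For \ref{AAA}: fix $T\in L$ and $0\le s\le r$, and count the $S\in L$ with $\rho_{\mL^*}(S)=s$ and $S\preceq T$. By the two reversal formulas this is exactly the number of $S\in L$ with $\rho_\mL(S)=r-s$ and $T\le S$, which by regularity of $\mL$ (second bullet of \ref{AAA}) depends only on $r-s$ and $\rho_\mL(T)=r-\rho_{\mL^*}(T)$, hence only on $s$ and $\rho_{\mL^*}(T)$; and this count equals $\mu_\ge(r-s,r-t)$ where $t=\rho_{\mL^*}(T)$, giving the first stated formula $\mu_\preceq(s,t)=\mu_\ge(r-s,r-t)$. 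Symmetrically, counting $S$ with $\rho_{\mL^*}(S)=s$ and $T\preceq S$ reduces via the reversals to counting $S$ with $\rho_\mL(S)=r-s$ and $S\le T$, which is $\mu_\le(r-s,r-t)$ by the first bullet of \ref{AAA}, proving $\mu_\succeq(s,t)=\mu_\le(r-s,r-t)$. For \ref{BBB}: I would invoke the already-quoted fact that $\mu_{\mL^*}(S,T)=\mu_\mL(T,S)$ for all $S\le_{\mL^*}$-comparable pairs (cited from \cite{incal}, Prop.~2.1.10). If $S\preceq T$ then $T\le S$ in $\mL$, so $\mu_{\mL^*}(S,T)=\mu_\mL(T,S)$, and by regularity \ref{BBB} of $\mL$ this depends only on $\rho_\mL(T)$ and $\rho_\mL(S)$, i.e.\ only on $r-\rho_{\mL^*}(T)$ and $r-\rho_{\mL^*}(S)$, hence only on $\rho_{\mL^*}(S)$ and $\rho_{\mL^*}(T)$. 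This establishes regularity of $\mL^*$; and in the notation introduced after Definition~\ref{defmoreg}, taking $S,T$ with $\rho_{\mL^*}(S)=s\le t=\rho_{\mL^*}(T)$ (so $\rho_\mL(T)=r-t\le r-s=\rho_\mL(S)$, and $T\le S$ in $\mL$) gives $\mu_{\mL^*}(s,t)=\mu_{\mL^*}(S,T)=\mu_\mL(T,S)=\mu_\mL(r-t,r-s)$, which is the third formula; the convention $\mu_{\mL^*}(s,t)=0$ for $s>t$ matches $\mu_\mL(r-t,r-s)=0$ since then $r-t<r-s$.

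The only genuine subtlety — and the step I would be most careful about — is checking that these counts and M\"obius values are well-defined in $\mL^*$, i.e.\ that they really do not depend on the choice of representative $T$ (resp.\ $S$); but this is precisely what regularity of $\mL$ guarantees after the rank-reversal substitution, so there is no real obstacle, only the need to keep the $r-s$, $r-t$ substitutions straight and to confirm that the existence of elements of every rank in $\mL^*$ (needed for the notation to make sense) follows from $\rho_{\mL^*}(L)=\{r-\rho : \rho\in\rho_\mL(L)\}=\{0,\dots,r\}$. I would also note in passing that $0_{\mL^*}=1_\mL$ and $1_{\mL^*}=0_\mL$, already recorded in the text, confirm that $0_{\mL^*}$ and $1_{\mL^*}$ are the unique elements of rank $0$ and $r$ in $\mL^*$, so the remaining hypotheses implicit in calling $\mL^*$ a regular graded lattice are in place. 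Altogether the proposition is a formal consequence of the definitions plus the two duality facts $\rho_{\mL^*}=r-\rho_\mL$ and $\mu_{\mL^*}(S,T)=\mu_\mL(T,S)$.
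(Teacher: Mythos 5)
Your argument is correct and matches the paper's intent exactly: the paper states the proposition without a written proof, remarking that it ``easily follows from the definitions and from the properties of the M\"{o}bius function,'' and the details it has in mind are precisely your rank-reversal substitutions $\rho_{\mL^*}=r-\rho_\mL$, the order reversal $S\preceq T\iff T\le S$, and the quoted identity $\mu_{\mL^*}(S,T)=\mu_\mL(T,S)$. Only note a cosmetic slip at the end: for $s>t$ one has $r-t>r-s$ (not $r-t<r-s$), which is exactly the condition making $\mu_\mL(r-t,r-s)=0$ under the paper's convention, so your conclusion there still stands.
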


 We conclude this section giving a sufficient condition 
 for lattice regularity that does not involve the M\"{o}bius
function. It can be used, for example, to easily test the regularity of the lattice represented in 
Figure \ref{fig:1}.

\begin{proposition}
 Let $\mL=(L,\le,\wedge,\vee)$ be a finite graded lattice.
Assume that for every $S,T \in L$ with $S \le T$ and for every 
$\rho_\mL(S) \le i \le \rho_\mL(T)$ the number
$\{ U \in L : S \le U \le T \mbox{ and } \rho_\mL(U)=i\}$
only depends on $i$, $\rho_\mL(S)$ and $\rho_\mL(T)$. Then $\mL$ is
 regular.
\end{proposition}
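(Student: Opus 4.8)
The plan is to verify the two defining conditions \ref{AAA} and \ref{BBB} of regularity directly from the stated hypothesis. Write $\rho:=\rho_\mL$ and $r:=\mbox{rk}(\mL)$, and for $S\le T$ in $L$ and $\rho(S)\le i\le\rho(T)$ let $N(i,\rho(S),\rho(T))$ denote the common value of $|\{U\in L:S\le U\le T,\ \rho(U)=i\}|$ provided by the hypothesis. Condition \ref{AAA} is then immediate: fixing $T$ with $\rho(T)=t$ and an integer $0\le s\le r$, and applying the hypothesis to the interval $[0_\mL,T]$ (here $0_\mL$, $1_\mL$ exist because $\mL$ is finite, with $\rho(0_\mL)=0$ and $\rho(1_\mL)=r$), the number of $S\le T$ with $\rho(S)=s$ equals $N(s,0,t)$ when $0\le s\le t$ and equals $0$ otherwise by monotonicity of $\rho$; in either case it depends only on $s$ and $t$. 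Symmetrically, applying the hypothesis to $[T,1_\mL]$ shows that the number of $S\ge T$ with $\rho(S)=s$ is $N(s,t,r)$ when $t\le s\le r$ and $0$ otherwise, so it again depends only on $s$ and $t$.

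The substance of the proof is condition \ref{BBB}, which I would obtain by induction on the rank gap $t-s$: for all $S\le T$ with $\rho(S)=s$ and $\rho(T)=t$, the integer $\mu_\mL(S,T)$ depends only on $s$ and $t$. For the base case $t=s$, gradedness forces $S=T$ (a maximal chain of $[S,T]$ would otherwise have positive length, contradicting $\rho(S)=\rho(T)$), whence $\mu_\mL(S,T)=1$. For $s<t$, using the recursive definition of the M\"obius function and grouping the summands by the rank of $U$ — noting that $\rho(U)=i<t=\rho(T)$ forces $U<T$ — one gets
\[
\mu_\mL(S,T)=-\sum_{S\le U<T}\mu_\mL(S,U)=-\sum_{i=s}^{t-1}\ \sum_{\substack{S\le U\le T\\ \rho(U)=i}}\mu_\mL(S,U).
\]
For each $i$ in this range and each $U$ occurring in the inner sum we have $S\le U$ with rank gap $i-s<t-s$, so by the inductive hypothesis $\mu_\mL(S,U)$ equals a constant $m(s,i)$; moreover the number of such $U$ is $N(i,s,t)$ (the hypothesis applies since $s\le i\le t$). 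Hence $\mu_\mL(S,T)=-\sum_{i=s}^{t-1}N(i,s,t)\,m(s,i)$, which depends only on $s$ and $t$. This closes the induction and yields \ref{BBB}, so $\mL$ is regular.

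I do not anticipate a genuine difficulty, only two points requiring care. First, in deducing \ref{AAA} one must invoke the hypothesis on the full intervals $[0_\mL,T]$ and $[T,1_\mL]$ rather than on all of $\mL$, and one must separately dispatch the ``out of range'' values of $s$ (using monotonicity of $\rho$) so that \ref{AAA} holds for every $0\le s\le r$ and not merely for $s$ between the two relevant ranks. Second, the main step \ref{BBB} relies on gradedness to guarantee that every $U$ with $S\le U\le T$ of rank strictly less than $\rho(T)$ lies strictly below $T$, so that the M\"obius recursion can legitimately be reindexed by rank; this is the only place where the graded hypothesis is genuinely used.
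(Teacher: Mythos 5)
Your proof is correct and follows essentially the same route as the paper: condition \ref{AAA} is read off directly from the hypothesis (applied to the intervals $[0_\mL,T]$ and $[T,1_\mL]$), and condition \ref{BBB} is proved by induction on $\rho_\mL(T)-\rho_\mL(S)$ via the defining recursion of the M\"obius function, which is exactly the argument the paper sketches. Your write-up simply supplies the details (rank-by-rank grouping, strict monotonicity of the rank function, the out-of-range cases) that the paper leaves implicit.
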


\begin{proof}
 Property \ref{AAA} of Definition \ref{defmoreg} is immediate, and property
\ref{BBB} can be proved by induction on $\rho_\mL(T)-\rho_\mL(S)$
using the definition of the M\"{o}bius function.
 \end{proof}
 

\section{Regular Supports and Duality} \label{secm}

In this section we propose a definition of regular support on a finite
abelian group, and establish some preliminary properties that will be needed 
in the sequel. 
In particular, we show that a regular support on a finite abelian group
$G$ canonically induces a regular support on $\hat{G}$.

As we will see in Section \ref{recover}, our definition of regular support 
generalizes both the Hamming and the rank support
for codes endowed with the Hamming and the rank weight, respectively. 
This explains the use of the word
``support'' in this paper.

\begin{notation} \label{min}
 If $G$ is a group,
$\mL=(L,\le)$ is a poset 
and $\sigma: G\to L$ is any function, then for all $S \in L$
we set
$G_\sigma(S):=\{ g \in G : \sigma(g) \le S\}$.
\end{notation}

\begin{definition}\label{defsupp0}
 Let $(G,+)$ be a
finite abelian group, and let $\mL=(L,\le, \wedge, \vee)$
be a 
regular lattice. A \textbf{regular support} 
on $G$ with values in $\mL$ is a function $\sigma:G \to L$ that satisfies the
following.
\begin{enumerate}[label=(\Alph*)]  \setlength\itemsep{0.2em}
 \item $\sigma(g)=0_{\mL}$ if and only if $g=0$. \label{uno}
 \item $\sigma(g)=\sigma(-g)$ for all $g \in G$. \label{unoprimo}
 \item $\sigma(g_1+g_2) \le \sigma(g_1) \vee \sigma(g_2)$ for all $g_1,g_2 \in
G$. \label{due}
\item $G_\sigma(S_1\vee
S_2)=G_\sigma(S_1)+G_\sigma(S_2)$ for all $S_1,S_2 \in L$. \label{tre}
\item For all $S \in L$, $|G_\sigma(S)|$ only depends on $\rho_\mL(S)$.
\label{quattro}
 \end{enumerate}
\end{definition}

Notice that properties \ref{uno}, \ref{unoprimo} and 
\ref{due} of Definition \ref{defsupp0} imply that $G_\sigma(S)$ is a subgroup of $G$
for any lattice element $S \in \mL$.

\begin{notation}
 We denote a regular support on $G$ with values in 
$\mL$ by $\sigma:G \dashrightarrow \mL$.
Moreover, for all $0 \le s \le r$ we set $$\gamma_\sigma(s):=|G_\sigma(S)|,$$
where $S \in L$ is any element with $\rho_\mL(S)=s$.
Given a lattice element $S \in L$ and a code $\mC \subseteq G$, we also define 
$\mC_{\sigma}(S):=G_\sigma(S) \cap \mC$.
\end{notation}

We can now show that the definition of regular support  behaves well
with respect to dualization. We start introducing some notation 
and establishing a preliminary lemma.

\begin{notation} \label{defstar}
 Let $\sigma:(G,+) \dashrightarrow
\mL=(L,\le,\wedge,\vee)$ be a regular support. Define
$\sigma^*:\hat{G} \to L$ by 
$$\sigma^*(\chi):= \bigvee \{ S \in L : \chi \in G_\sigma(S)^*\}$$
for all $\chi \in \hat{G}$.
Since $G_\sigma(0_\mL)=\{0\}$ by property \ref{uno} of Definition
\ref{defsupp0}, we have $\chi \in G_\sigma(0_\mL)^*$ for any
$\chi \in \hat{G}$. This shows that $\sigma^*(\chi)$ is
well defined. We regard $\sigma^*$ as a function on $\hat{G}$ with values 
in  $\mL^*$. In particular, according to Notation \ref{min},
for $S \in L$ we have
$$\hat{G}_{\sigma^*}(S)= \{ \chi \in \hat{G} : \sigma^*(\chi) \preceq S\}.$$
\end{notation}

\begin{lemma} \label{tecn}
Let $\sigma:(G,+) \dashrightarrow \mL=(L,\le,\wedge,\vee)$ be a 
 regular support. For all $\chi \in \hat{G}$ we have $\chi \in
G_\sigma(\sigma^*(\chi))^*$.
Equivalently,
$\sigma^*(\chi)$ is the maximum $S \in L$ such that $\chi \in
G_\sigma(S)^*$.
\end{lemma}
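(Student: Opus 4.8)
The plan is to identify $\sigma^*(\chi)$ with the greatest element of the set
$$\mathcal{T}_\chi := \{\, S \in L : \chi \in G_\sigma(S)^* \,\},$$
after which both assertions of the lemma follow at once: the claim $\chi\in G_\sigma(\sigma^*(\chi))^*$ says exactly that $\sigma^*(\chi)=\bigvee\mathcal{T}_\chi$ actually belongs to $\mathcal{T}_\chi$, and once it does, being by construction an upper bound for every member of $\mathcal{T}_\chi$, it is automatically the maximum of $\mathcal{T}_\chi$, which is the equivalent reformulation.

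First I would record the auxiliary fact that $G_\sigma(S)$ is a code in $G$ for every $S\in L$: it contains $0$ since $\sigma(0)=0_\mL\le S$ by property \ref{uno} of Definition \ref{defsupp0}; it is stable under negation by property \ref{unoprimo}; and if $g_1,g_2\in G_\sigma(S)$ then $\sigma(g_1+g_2)\le\sigma(g_1)\vee\sigma(g_2)\le S$ by property \ref{due}, so $g_1+g_2\in G_\sigma(S)$. Hence $G_\sigma(S)^*\subseteq\hat G$ is a well-defined dual code, and $\mathcal{T}_\chi$ is nonempty because $0_\mL\in\mathcal{T}_\chi$: indeed $G_\sigma(0_\mL)=\{0\}$ by \ref{uno}, whose dual is all of $\hat G$.

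Next I would establish two closure properties of $\mathcal{T}_\chi$. It is a down-set: if $S'\le S$ then $G_\sigma(S')\subseteq G_\sigma(S)$, hence $G_\sigma(S)^*\subseteq G_\sigma(S')^*$, and so $S\in\mathcal{T}_\chi$ implies $S'\in\mathcal{T}_\chi$. It is closed under binary joins: for $S_1,S_2\in\mathcal{T}_\chi$, axiom \ref{tre} gives $G_\sigma(S_1\vee S_2)=G_\sigma(S_1)+G_\sigma(S_2)$, and applying part \ref{dimsumb} of Remark~\ref{cardual} to the codes $G_\sigma(S_1)$ and $G_\sigma(S_2)$ yields $G_\sigma(S_1\vee S_2)^*=G_\sigma(S_1)^*\cap G_\sigma(S_2)^*$; since $\chi$ lies in both $G_\sigma(S_i)^*$, it lies in $G_\sigma(S_1\vee S_2)^*$, so $S_1\vee S_2\in\mathcal{T}_\chi$.

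Finally, since $L$ is finite I would list $\mathcal{T}_\chi=\{S_1,\dots,S_k\}$ and conclude, by associativity of $\vee$ and a trivial induction on $k$ using the join-closure just proved, that $\sigma^*(\chi)=\bigvee\mathcal{T}_\chi=S_1\vee\cdots\vee S_k\in\mathcal{T}_\chi$; this is precisely $\chi\in G_\sigma(\sigma^*(\chi))^*$, and then $\sigma^*(\chi)$ is the maximum of $\mathcal{T}_\chi$ as observed above. The only step carrying any content is the closure of $\mathcal{T}_\chi$ under joins, which is exactly where axiom \ref{tre} and the duality identity $(\mC+\mD)^*=\mC^*\cap\mD^*$ are combined; verifying beforehand that each $G_\sigma(S)$ is a subgroup (so that this identity applies) is a prerequisite one should not skip, but everything else is bookkeeping and there is no genuine obstacle.
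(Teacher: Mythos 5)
Your proposal is correct and follows essentially the same route as the paper: both arguments enumerate the set $\{S\in L:\chi\in G_\sigma(S)^*\}$, use property \ref{tre} together with associativity of the join to identify $G_\sigma$ of the join with the sum of the $G_\sigma(S_i)$, and then apply the duality identity of Remark \ref{cardual} to pass to the intersection of the duals. Your explicit check that each $G_\sigma(S)$ is a subgroup (so the duality identity applies) is a small added care the paper leaves implicit, but it does not change the argument.
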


\begin{proof}
 Let $\chi \in \hat{G}$ be any character. As already shown, 
$\{ S\in L : \chi \in G(S)^*\} \neq \emptyset$. Choose an
enumeration
$\{ S\in L : \chi \in G(S)^*\}=\{ S_1,S_2,...,S_t\}$.
By property \ref{tre} of Definition \ref{defsupp0} and the
associativity of the join, we have $G_\sigma(S_1 \vee S_2 \vee \cdots
\vee S_t)=
G_\sigma(S_1)+G_\sigma(S_2)+ \cdots + G_\sigma(S_t)$.
Therefore Remark~\ref{cardual} implies
$G_\sigma(S_1 \vee S_2 \vee \cdots \vee S_t)^*=
G_\sigma(S_1)^* \cap G_\sigma(S_2)^* \cap  \cdots \cap
G_\sigma(S_t)^*$.
Since $\chi \in G_\sigma(S_i)^*$ for all $i \in \{1,...,t\}$, we have
$\chi \in G_\sigma(\sigma^*(\chi))^*$, as claimed.
\end{proof}

The following central theorem establishes the main properties of 
a regular support. In particular, it shows that
a regular support on a group $G$ with values in a lattice $\mL$
canonically induces a regular support on the character group 
$\hat{G}$,
with values in the dual lattice $\mL^*$.

\begin{theorem}  \label{fund}
Let $\sigma:(G,+) \dashrightarrow \mL =(L,\le,\wedge,\vee)$ be  a 
 regular support. The following hold.

\begin{enumerate} \setlength\itemsep{0.2em}
 \item $G_\sigma(S)^*=\hat{G}_{\sigma^*}(S)$ for all $S \in L$. \label{pa} 
 
 \item The map $\chi \mapsto \sigma^*(\chi)$ defines a
 regular support  $\sigma^*:(\hat{G},\cdot)
\dashrightarrow \mL^*=(L, \preceq,
\curlywedge,\curlyvee)$. \label{pb}

\item $\gamma_{\sigma^*}(s)=|G|/\gamma_\sigma(\mbox{rk}(\mL)-s)$ for all $0
\le s \le \mbox{rk}(\mL)$. \label{pc}

\item Identifying $\doublehat{G}$ and $G$, we have
$\sigma^{**}=\sigma$.
\end{enumerate}
\end{theorem}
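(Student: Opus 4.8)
The plan is to prove the four assertions in the order listed, each feeding into the next, leaning throughout on reflexivity ($\mC^{**}=\mC$) and on the sum--intersection duality of Remark \ref{cardual}. Two elementary observations will be used repeatedly: $G_\sigma$ is monotone, so $S\le T$ forces $G_\sigma(T)^*\subseteq G_\sigma(S)^*$, and $G_\sigma(S_1\wedge S_2)=G_\sigma(S_1)\cap G_\sigma(S_2)$ for all $S_1,S_2\in L$ (both immediate from the definitions of $G_\sigma$ and of the meet).

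For item (1), if $\chi\in G_\sigma(S)^*$ then $S$ lies in the set whose join defines $\sigma^*(\chi)$, so $S\le\sigma^*(\chi)$, i.e.\ $\chi\in\hat G_{\sigma^*}(S)$; conversely, $\chi\in\hat G_{\sigma^*}(S)$ means $S\le\sigma^*(\chi)$, and since $\chi\in G_\sigma(\sigma^*(\chi))^*$ by Lemma \ref{tecn}, monotonicity gives $\chi\in G_\sigma(S)^*$. Dualizing this identity and identifying $\doublehat{G}=G$ (so that $G_\sigma(S)^{**}=G_\sigma(S)$) yields the companion formula $\hat G_{\sigma^*}(S)^*=G_\sigma(S)$, which I will use below. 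Item (3) is then immediate: by item (1) and Remark \ref{cardual}, $|\hat G_{\sigma^*}(S)|=|G|/|G_\sigma(S)|=|G|/\gamma_\sigma(\rho_\mL(S))$, and $\rho_\mL(S)=\mathrm{rk}(\mL)-\rho_{\mL^*}(S)$; the same computation shows $|\hat G_{\sigma^*}(S)|$ depends only on $\rho_{\mL^*}(S)$, which is axiom \ref{quattro} of Definition \ref{defsupp0} for $\sigma^*$.

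For item (2), $\mL^*$ is regular by Proposition \ref{dualmre}, so it remains to verify axioms \ref{uno}, \ref{unoprimo}, \ref{due} and \ref{tre} for $\sigma^*:(\hat G,\cdot)\dashrightarrow\mL^*$. Axiom \ref{uno}: since $0_{\mL^*}=1_\mL$, one has $\sigma^*(\chi)=0_{\mL^*}$ iff $\sigma^*(\chi)\preceq 0_{\mL^*}$ iff $\chi\in\hat G_{\sigma^*}(1_\mL)=G_\sigma(1_\mL)^*=G^*=\{\varepsilon\}$. Axiom \ref{unoprimo}: $\chi^{-1}\in G_\sigma(S)^*\iff\chi\in G_\sigma(S)^*$, hence $\sigma^*(\chi^{-1})=\sigma^*(\chi)$. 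Axiom \ref{due}: with $S_i:=\sigma^*(\chi_i)$, monotonicity puts $\chi_1$ and $\chi_2$ in the subgroup $G_\sigma(S_1\wedge S_2)^*$, so $\chi_1\chi_2$ lies there too, whence $S_1\wedge S_2\le\sigma^*(\chi_1\chi_2)$, i.e.\ $\sigma^*(\chi_1\chi_2)\preceq\sigma^*(\chi_1)\curlyvee\sigma^*(\chi_2)$. Axiom \ref{tre}: since $S_1\curlyvee S_2=S_1\wedge S_2$, combining item (1), the identity $G_\sigma(S_1\wedge S_2)=G_\sigma(S_1)\cap G_\sigma(S_2)$, and part \ref{dimsumb} of Remark \ref{cardual} applied in $\hat G$ (with $\mC^{**}=\mC$) gives $\hat G_{\sigma^*}(S_1\curlyvee S_2)=(G_\sigma(S_1)\cap G_\sigma(S_2))^*=G_\sigma(S_1)^*\cdot G_\sigma(S_2)^*=\hat G_{\sigma^*}(S_1)\cdot\hat G_{\sigma^*}(S_2)$.

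Finally, for item (4): applying Notation \ref{defstar} with $\sigma^*:\hat G\dashrightarrow\mL^*$ in the role of $\sigma$, so that the join appearing there is that of $\mL^*$, i.e.\ the meet $\bigwedge$ of $\mL$, we get $\sigma^{**}(g)=\bigwedge\{S\in L:g\in\hat G_{\sigma^*}(S)^*\}$; by the companion formula of item (1) this set is $\{S\in L:\sigma(g)\le S\}$, whose meet is $\sigma(g)$, so $\sigma^{**}=\sigma$. I expect no genuine difficulty: the content reduces to repeated use of $\mC^{**}=\mC$ and the sum--intersection duality. The point that demands care---and where axiom \ref{tre} for $\sigma^*$ is the only slightly substantial step---is the bookkeeping: keeping straight which order ($\le$ versus $\preceq$), which meet and which join are operative when passing between $\mL$ and $\mL^*$, and carrying the identification $\doublehat{G}=G$ (hence $G_\sigma(S)^{**}=G_\sigma(S)$) correctly through the double dualization.
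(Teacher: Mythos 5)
Your proposal is correct and follows essentially the same route as the paper: part (1) via Lemma \ref{tecn}, the axioms of Definition \ref{defsupp0} for $\sigma^*$ checked one by one using Proposition \ref{dualmre}, the meet--intersection identity and Remark \ref{cardual}, and parts (3)--(4) via the companion identity $\hat G_{\sigma^*}(S)^*=G_\sigma(S)$. The only differences are cosmetic (you establish axiom \ref{quattro} together with part (3), and verify axiom \ref{due} via monotonicity and subgroup closure rather than evaluating characters on the intersection), so there is nothing to add.
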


\begin{proof}
 \begin{enumerate} \setlength\itemsep{0.2em}
\item Take any $S \in L$. If $\chi \in G_\sigma(S)^*$ then, by
definition, $S \le
\sigma^*(\chi)$,
i.e., $\sigma^*(\chi) \preceq S$. This shows that $G_\sigma(S)^* \subseteq
\hat{G}_{\sigma^*}(S)$.
Now assume that $\chi \in  \hat{G}_{\sigma^*}(S)$, and let $g \in G_\sigma(S)$.
We have $\sigma(g) \le S \le \sigma^*(\chi)$, hence $g \in
G_\sigma(\sigma^*(\chi))$.
Lemma \ref{tecn} implies $\chi(g)=1$, so $\hat{G}_{\sigma^*}(S)
\subseteq 
G_\sigma(S)^*$.

\item The lattice $\mL^*$ is regular
by Proposition \ref{dualmre}, and the group
$(\hat{G},\cdot)$ is finite and abelian. Let 
$\varepsilon$ be the trivial character of $G$.
By \ref{pa} we have
$\hat{G}_{\sigma^*}(0_{\mL^*})=G_{\sigma}(1_\mL)^*=G^*=\{\varepsilon\}$,
and this proves property \ref{uno} of Definition \ref{defsupp0}.
For $\chi \in \hat{G}$ and $S \in L$ we have $\chi \in G_\sigma(S)^*$ if
and only if $1/\chi \in G_\sigma(S)^*$. By definition of dual support, this
gives property \ref{unoprimo}.
Now take any 
$\chi_1, \chi_2 \in \hat{G}$, and let
$g \in G_\sigma(\sigma^*(\chi_1)) \cap G_\sigma(\sigma^*(\chi_2))$. 
Lemma \ref{tecn}  implies $\chi_1(g)=\chi_2(g)=1$, and so
$(\chi_1 \cdot \chi_2)(g)=\chi_1(g) \chi_2(g)=1$.
Therefore $$\chi_1 \cdot \chi_2 \in
(G_\sigma(\sigma^*(\chi_1)) \cap 
G_\sigma(\sigma^*(\chi_2)))^* = G_\sigma(\sigma^*(\chi_1) \wedge
\sigma^*(\chi_2))^*,$$
where the last equality directly follows from the definition of meet.
As a consequence we have $\sigma^*(\chi_1) \wedge \sigma^*(\chi_2)
\le 
\sigma^*(\chi_1 \cdot \chi_2)$, i.e., $\sigma^*(\chi_1 \cdot \chi_2) \preceq
\sigma^*(\chi_1) 
\curlyvee \sigma^*(\chi_2)$. This establishes property~\ref{due}.
Let $S_1, S_2 \in L$. By definition of meet we have 
$G_\sigma(S_1 \wedge S_2)=G_\sigma(S_1) \cap G_\sigma(S_2)$. Taking the duals,
by Remark \ref{cardual} we obtain
$G_\sigma(S_1 \wedge S_2)^*=G_\sigma(S_1)^* \cdot G_\sigma(S_2)^*$, and part 
\ref{pa} of the statement gives
$\hat{G}_{\sigma^*}(S_1 \wedge S_2)= \hat{G}_{\sigma^*}(S_1) \cdot
\hat{G}_{\sigma^*}(S_2)$,
i.e., $\hat{G}_{\sigma^*}(S_1 \curlyvee S_2)= \hat{G}_{\sigma^*}(S_1) 
\cdot \hat{G}_{\sigma^*}(S_2)$. This is property~\ref{tre}. Let $S \in L$. 
By part \ref{pa} and Remark \ref{cardual} we have
$|\hat{G}_{\sigma^*}(S)|=|G|/|G_\sigma(S)|$. Hence
$|\hat{G}_{\sigma^*}(S)|$ only depends on $\rho_{\mL^*}(S)=
\mbox{rk}(\mL)-\rho_\mL(S)$. This is property \ref{quattro}. 

\item Let $r:=\mbox{rk}(\mL)=\mbox{rk}(\mL^*)$. Take any element $S \in L$ with
$\rho_{\mL^*}(S)=s$.  Part
\ref{pa} and Remark~\ref{cardual} imply
$\hat{G}_{\sigma^*}
(S)^*=G_\sigma(S)$. Therefore 
$\gamma_{\sigma^*}(s)=|\hat{G}_{\sigma^*}(S)|=|G|/|\hat{G}_{\sigma^*}
(S)^*|=|G|/|G_\sigma(S)|=|G|/\gamma_\sigma(s)$, as desired.

\item As before, part \ref{pa} and Remark \ref{cardual} give 
 $\hat{G}_{\sigma^*}(S)^*=G_\sigma(S)$ for all $S \in L$.
 Hence, for all $g
\in G$,
$$\sigma^{**}(g) = \bigcurlyvee \{ S \in L : g \in \hat{G}_{\sigma^*}(S)^*\}
=
\bigwedge \{ S \in L : g \in G_\sigma(S)\}=\bigwedge \{ S \in L : \sigma(g) \le
S\}=\sigma(g).$$
This concludes the proof. \qedhere
\end{enumerate}
\end{proof}

\begin{definition}
 The regular support
$\sigma^*:(\hat{G},\cdot) \dashrightarrow\mL^*$ defined by 
part \ref{pb} of Theorem \ref{fund} and Notation~\ref{defstar} 
is called the \textbf{dual support} of $\sigma$.
\end{definition}

Regular supports can be constructed over any finite abelian group $G$ using as lattice any chain
of subgroups of $G$.
The regular support constructed in the following Example \ref{chsupp} will be used later to
show the existence, over any finite abelian group, of the following objects: (i) weight functions yielding MacWilliams identities,
(ii) Fourier-reflexive partitions, (iii) pairs of weights that, simultaneously, yield MacWilliams identities,
and induce metric space structures on both the underlying groups.

\begin{example}[Chain support] \label{chsupp}
 Let $(L,\le)$ be a finite chain, and let $S_0 < S_1 < \cdots < S_r$ be the
elements of $L$.
 For all $i,j \in \{0,...,r\}$ define $S_i \wedge S_j :=S_{\min\{ i,j\}}$
and $S_i \vee S_j :=S_{\max\{ i,j\}}$. Then $\mL=(L,\le,\wedge,\vee)$ is
 regular lattice of rank $r$ with:
$$\mu_\le(s,t)= \left\{ \begin{array}{rl}
                             1 & \mbox{ if $s \le t$} \\
                             0 & \mbox{ else}
                            \end{array}  \right.\  \ \ \ \ \  
   \mu_\ge(s,t)= \left\{ \begin{array}{rl}
                             1 & \mbox{ if $s \ge t$} \\
                             0 & \mbox{ else}
                            \end{array}  \right.\  \ \ \ \ \ 
  \mu_\mL(s,t)=
\left\{ \begin{array}{rl}
                             1 & \mbox{ if $s=t$} \\
                             -1 & \mbox{ if $t=s+1$} \\
                             0 & \mbox{ else}
                            \end{array}
\right.\  $$
for all $0 \le s,t \le r$.
Now let $(G,+)$ be a finite abelian group, and let 
$\mL=(L,\subseteq,\wedge,\vee)$ be a chain of subgroups of $G$, i.e.,
$\{ 0 \} = G_0 \subsetneq G_1
\subsetneq
 \cdots \subsetneq G_r=G$,  endowed with
the structure of regular lattice 
 described above. The \textbf{chain support}
$\sigma:G \dashrightarrow\mL$ is the function $\sigma:G \to L$ defined,
for all $g \in G$, by
$\sigma(g):=G_i$, where 
$i= \min \{0 \le j \le r : g \in G_j \}$. It is easy to check that 
$\sigma$ is a regular support. By definition,
$G_\sigma(G_s)=G_s$ for all $0 \le s \le r$, and therefore
$\gamma_\sigma(s)=|G_s|$ for all $s$.
Moreover,
for any $\chi \in \hat{G}$ we have
$\sigma^*(\chi)=G_i$, where $i=\max\{ 0 \le j \le r : \chi \in G_j^*\}$.
\end{example}

Notice that not all regular supports on a group $G$
arise from chains of subgroups of $G$. Other examples will be given in Section \ref{recover}
when revisiting the theory of MacWilliams identities for certain classes of codes.


\section{Compatible Weights from Regular
Supports} \label{secmw}

In this section we show that a regular support $\sigma: G \dashrightarrow \mL$ 
on a finite abelian group $G$ induces a pair of
compatible weights on $G$ and $\hat{G}$, yielding  MacWilliams identities. 
Moreover, we express the associated Krawtchouk coefficients in terms of the
combinatorial invariants of the lattice $\mL$, proving that they are 
integers with a precise combinatorial significance.
As we will see, in many relevant
examples the lattice invariants are very easy to determine. In particular, this allows to 
easily compute the Krawtchouk coefficients.
The simplification relies 
on the specific fact that the weight functions on $G$ and $\hat{G}$
both factor through a regular support map.
Whenever this happens, the following Theorem~\ref{rpm} gives an effective 
method to compute
the Krawtchouk coefficients. 

\begin{remark}
The fact that a regular support automatically yields MacWilliams identities is 
not obvious, as the definition of 
support is completely independent from the specific structure of the character group. 
This differs from previous approaches, where partitions yielding MacWilliams identities are defined (or characterized)
 in terms on their interaction with the character group. 
\end{remark}

We start observing that a regular support $\sigma: G \dashrightarrow \mL$ 
induces a weight function on 
$G$ via the rank function of $\mL$.

\begin{definition}
 Let $\sigma:(G,+) \dashrightarrow \mL$ be a regular support.
 The \textbf{weight} on $G$ \textbf{induced} by $\sigma$ is the function
$\omega_\sigma:G \to \{ 0,...,\mbox{rk}(\mL)\}$
defined by $\omega_\sigma(g):=\rho_\mL(\sigma(g))$ for all $g \in G$. 
\end{definition}

We can now state our main result.

\begin{theorem} \label{rpm}
 Let $\sigma:(G,+) \dashrightarrow \mL$ be a regular support,
$r=\mbox{rk}(\mL)$. The
 following hold.
 
 \begin{enumerate}  \setlength\itemsep{0.2em}
  \item The pair \label{111}$(\omega_{\sigma^*},\omega_\sigma)$ is compatible.
Moreover, for
  all $i \in  \omega_{\sigma^*}(\hat{G})$ and $j \in \omega_\sigma(G)$ we
have
  $$K({\omega_{\sigma^*},\omega_{\sigma}})(i,j) \ = \ \sum_{s=0}^r
\gamma_\sigma(s)
\ \mu_\mL(s,j)
 \ \mu_\le(s,r-i)  \  \mu_\ge(j,s).$$
 
 \item The pair \label{222}$(\omega_\sigma,\omega_{\sigma^*})$ is compatible.
Moreover, for
  all $i \in  \omega_\sigma(G)$ and $j \in \omega_{\sigma^*}(\hat{G})$ we have
  $$K({\omega_\sigma,\omega_{\sigma^*}})(i,j) \ = \ |G| \ \sum_{s=0}^r
\label{c2}
\frac{1}{\gamma_{\sigma}(r-s)} \
\mu_{\mL}(r-j,r-s)
 \ \mu_\ge(r-s,i)  \  \mu_\le(r-j,r-s). $$
 \end{enumerate}
 In particular, the Krawtchouk coefficients associated to both the pairs 
 $(\omega_{\sigma^*},\omega_\sigma)$ and $({\omega_\sigma,\omega_{\sigma^*}})$
 are integers determined by the combinatorial invariants of $\mL$ and $\sigma$. 
\end{theorem}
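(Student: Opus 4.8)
The plan is to establish part~\ref{111} by a M\"obius-inversion computation carried out directly on the lattice $\mL$, and then to obtain part~\ref{222} formally by applying part~\ref{111} to the dual support $\sigma^*$ and translating the outcome via Proposition~\ref{dualmre} and Theorem~\ref{fund}.

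For part~\ref{111}, by the form of compatibility recorded in Remark~\ref{dualkk} it suffices to fix an arbitrary $\chi\in\hat{G}$, to write $T:=\sigma^*(\chi)\in L$---so that $\rho_\mL(T)=r-i$ for $i:=\omega_{\sigma^*}(\chi)$, since $\rho_{\mL^*}(S)=r-\rho_\mL(S)$ for all $S$---and to show that for each $j$ the character sum $\Sigma_j:=\sum_{g\in G,\ \omega_\sigma(g)=j}\chi(g)$ depends on $\chi$ only through $i$, while computing its value. The first observation is that each $G_\sigma(S)$ is a subgroup of $G$ (immediate from axioms~\ref{uno}, \ref{unoprimo} and~\ref{due} of Definition~\ref{defsupp0}), so by orthogonality of characters $\sum_{g\in G_\sigma(S)}\chi(g)$ equals $|G_\sigma(S)|=\gamma_\sigma(\rho_\mL(S))$ when $\chi\in G_\sigma(S)^*$ and $0$ otherwise; moreover Lemma~\ref{tecn} together with the definition of $\sigma^*$ shows that $\chi\in G_\sigma(S)^*$ if and only if $S\le T$. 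Setting $f(U):=\sum_{\sigma(g)=U}\chi(g)$ for $U\in L$, one has $\sum_{U\le S}f(U)=\sum_{g\in G_\sigma(S)}\chi(g)$ for every $S$, so M\"obius inversion on $\mL$ gives $f(U)=\sum_{V\le U,\ V\le T}\mu_\mL(V,U)\,\gamma_\sigma(\rho_\mL(V))$. Summing this over all $U$ with $\rho_\mL(U)=j$, exchanging the order of summation so that the outer variable is $V$ subject to $V\le T$, and grouping the $V$'s by their rank $s=\rho_\mL(V)$, the regularity of $\mL$ lets one evaluate each contribution: by condition~\ref{BBB} of Definition~\ref{defmoreg} every $\mu_\mL(V,U)$ that occurs equals $\mu_\mL(s,j)$; by the second bullet of condition~\ref{AAA} of Definition~\ref{defmoreg} the number of $U$ of rank $j$ with $V\le U$ is $\mu_\ge(j,s)$; and by the first bullet of that condition the number of $V$ of rank $s$ with $V\le T$ is $\mu_\le(s,\rho_\mL(T))=\mu_\le(s,r-i)$. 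The double sum therefore collapses to $\Sigma_j=\sum_{s=0}^r\gamma_\sigma(s)\,\mu_\mL(s,j)\,\mu_\le(s,r-i)\,\mu_\ge(j,s)$, an expression depending on $\chi$ only through $i$; this proves simultaneously that $(\omega_{\sigma^*},\omega_\sigma)$ is compatible and that $K(\omega_{\sigma^*},\omega_\sigma)(i,j)$ has the claimed value.

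For part~\ref{222}, I would apply part~\ref{111} to the dual support $\sigma^*:(\hat{G},\cdot)\dashrightarrow\mL^*$, which is legitimate since $\sigma^*$ is a regular support by part~\ref{pb} of Theorem~\ref{fund} and $\mL^*$ is regular of rank $r$ by Proposition~\ref{dualmre}. Because $\sigma^{**}=\sigma$, and hence $\omega_{\sigma^{**}}=\omega_\sigma$, by the last part of Theorem~\ref{fund}, this already yields that $(\omega_\sigma,\omega_{\sigma^*})$ is compatible and that $K(\omega_\sigma,\omega_{\sigma^*})(i,j)=\sum_{s=0}^r\gamma_{\sigma^*}(s)\,\mu_{\mL^*}(s,j)\,\mu_\preceq(s,r-i)\,\mu_\succeq(j,s)$. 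It then remains to rewrite the invariants of $\mL^*$ in terms of those of $\mL$: part~\ref{pc} of Theorem~\ref{fund} gives $\gamma_{\sigma^*}(s)=|G|/\gamma_\sigma(r-s)$, and Proposition~\ref{dualmre} gives $\mu_{\mL^*}(s,j)=\mu_\mL(r-j,r-s)$, $\mu_\preceq(s,r-i)=\mu_\ge(r-s,i)$ and $\mu_\succeq(j,s)=\mu_\le(r-j,r-s)$. Substituting these identities produces exactly the formula in part~\ref{222}.

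I expect the only genuinely delicate point to be the exchange of summation and the regrouping at the end of part~\ref{111}: one must verify carefully that, once the common rank $s$ of the inner variable is fixed, the two combinatorial counts appearing are indeed the lattice invariants $\mu_\le$ and $\mu_\ge$---which is precisely what the two bullets of condition~\ref{AAA} of Definition~\ref{defmoreg} supply---and that the M\"obius weights depend only on the ranks involved, which is condition~\ref{BBB}. Everything else reduces to standard character theory and formal bookkeeping.
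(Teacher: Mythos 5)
Your proposal is correct and follows essentially the same route as the paper: fix a character, use orthogonality on the subgroups $G_\sigma(S)$ together with M\"obius inversion on $\mL$, regroup by rank using regularity to get the stated formula for $K(\omega_{\sigma^*},\omega_\sigma)$, and deduce part 2 by applying part 1 to $\sigma^*$ and translating via Theorem \ref{fund} and Proposition \ref{dualmre}. The only cosmetic difference is that you justify $\chi\in G_\sigma(S)^*\Leftrightarrow S\le\sigma^*(\chi)$ directly from Lemma \ref{tecn} and the definition of $\sigma^*$, where the paper invokes part \ref{pa} of Theorem \ref{fund}.
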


\begin{proof} Throughout this proof,
a sum over an empty set of indices is zero by definition.
 Let us first show part \ref{111}. Part \ref{222} will follow easily. Fix 
any
character 
$\chi \in \hat{G}$, and let $f,g:L \to \C$ be the 
complex-valued functions defined by
$$f(T):=\sum_{\substack{g \in G \\ \sigma(g)=T}} \chi(g), \ \ \ \ \ \ 
g(T):= \sum_{S \le T} f(S) 
 \ \ \ \  \ \ 
\mbox{ for all } T \in L.$$
By the orthogonality relations of characters (see 
\cite[Lemma 1.1.32]{vl}), for all  $T \in L$ we have
$$g(T)=\sum_{S \le T} f(S) = \sum_{g \in G_\sigma(T)} \chi(g)=
\left\{ \begin{array}{cl}
                                     \gamma_\sigma(\rho_\mL(T)) & \mbox{ 
if $\chi \in G_\sigma(T)^*$} \\
0 & \mbox{ if $\chi \notin G_\sigma(T)^*$.}
                                    \end{array}
\right.\  $$
Therefore applying the  M\"{o}bius inversion formula (e.g., 
\cite[Proposition 3.7.1]{ec}) to $f$ and $g$ we obtain

\begin{align*}
 f(T) &\ = \  \sum_{\substack{S \le T \\ \chi \in G_\sigma(S)^*}}
\gamma_\sigma(\rho_\mL(S))
\ \mu_\mL(S,T)  \ \ = \ \  \sum_{s=0}^r \sum_{\substack{ S \le T \\
\rho_\mL(S)=s \\ \chi \in
G_\sigma(S)^*}}
\gamma_\sigma(s) \ \mu_\mL(S,T) \\
&\ = \  \ \  \sum_{s=0}^r \sum_{\substack{ S \le T \\ \rho_\mL(S)=s \\ \chi \in 
\hat{G}_{\sigma^*}(S)}}
\gamma_\sigma(s) \ \mu_\mL(S,T),
\end{align*}
where the last equality follows from part \ref{pa} of Theorem \ref{fund}.
Thus for
all 
$0 \le j \le r$ one has
\begin{align}
 \sum_{\substack{g \in G \\ \omega_{\sigma}(g)=j}} \chi(g) 
&\ = \ \sum_{\substack{T \in L \\ \rho_\mL(T)=j}} f(T) \nonumber  \ \ =
\ \  \sum_{\substack{T \in L \\ \rho_\mL(T)=j}} \sum_{s=0}^r 
\sum_{\substack{ S \le T \\ \rho_\mL(S)=s  \\ \chi \in 
\hat{G}_{\sigma^*}(S)}}
\gamma_\sigma(s) \ \mu_\mL(S,T) \nonumber \\
&\ = \  \ \ \sum_{s=0}^r \gamma_\sigma(s) \sum_{\substack{T \in L \\
\rho_\mL(T)=j}}  
\sum_{\substack{ S \le T \\ \rho_\mL(S)=s \\ \chi \in 
\hat{G}_{\sigma^*}(S)}}
  \mu_\mL(S,T). \nonumber 
\end{align}
By the regularity of $\mL$,  $\mu_\mL(S,T)=\mu_\mL(s,j)$
for all $S,T \in L$ with $S \le T$, $\rho_\mL(S)=s$ and $\rho_\mL(T)=j$. 
Setting
$\alpha(s,j,\chi):=|\{ (S,T) \in L \times L :  \rho_\mL(S)=s, \ \rho_\mL(T)=j, 
\ S \le T, \ 
\sigma^*(\chi) \preceq S\}|$ we obtain
\begin{eqnarray} \label{eqprinc}
  \sum_{\substack{g \in G \\ \omega_{\sigma}(g)=j}} \chi(g) &=& 
 \sum_{s=0}^r \gamma_\sigma(s) \ \mu_\mL(s,j)
 \ \alpha(s,j,\chi).
\end{eqnarray}
We now derive a more convenient expression for $\alpha(s,j,\chi)$. By definition,
$$\alpha(s,j,\chi) = \sum_{\substack{S \in L \\ \rho_\mL(S)=s \\ \sigma^*(\chi)
\preceq S}}
|\{ T \in L : \rho_{\mL}(T)=j, \ S \le T\}| =
 \sum_{\substack{S \in L \\ \rho_\mL(S)=s \\ S \le \sigma^*(\chi) }}
\mu_\ge(j,s) \ = \ 
 \mu_\le(s,\rho_\mL(\sigma^*(\chi)))  \  \mu_\ge(j,s).$$
By the properties of the rank of the dual lattice (see Section
\ref{seclattic})
and the definition of $\omega_{\sigma^*}$, we have
$\rho_\mL(\sigma^*(\chi))=r-\rho_{\mL^*}(\sigma^*(\chi))=r-\omega_{\sigma^*}
(\chi)$. It follows 
$\mu_\le(s,\rho_\mL(\sigma^*(\chi)))  =
\mu_\le(s,r-\omega_{\sigma^*}(\chi))$, hence 
$\alpha(s,j,\chi) = \mu_\le(s,r-\omega_{\sigma^*}(\chi)) \ \mu_\ge(j,s)$.
Substituting
this expression for $\alpha(s,j,\chi)$ into~(\ref{eqprinc})
yields 
$$\sum_{\substack{g \in G \\ \omega_{\sigma}(g)=j}} \chi(g) \ = \ 
\sum_{s=0}^r \gamma_\sigma(s)
\ \mu_\mL(s,j)
 \ \mu_\le(s,r-\omega_{\sigma^*}(\chi))  \  \mu_\ge(j,s).$$
By Remark \ref{dualkk}, this shows part \ref{111}.

By Theorem
\ref{fund}, $\sigma^*$ is a regular support, and
$\sigma^{**}=\sigma$ when identifying $G$ and $\doublehat{G}$.
Therefore part~\ref{222} follows from part \ref{111} applied to
$\sigma^*:\hat{G} \dashrightarrow \mL^*$, 
along with Proposition \ref{dualmre}.

We conclude observing that the Krawtchouk coefficients are indeed integers,
as for all $0 \le s \le r$ the numbers $\gamma_\sigma(s)$ and 
$\gamma_\sigma(r-s)$ express
the cardinality of subgroups of $G$.
\end{proof}

 Let $\sigma:G \dashrightarrow \mL$ be a regular support. In the language of
  \cite{heide1},
the partitions $\mP(\omega_\sigma)$ and $\mP(\omega_{\sigma^*})$ are both
Fourier-reflexive and mutually dual, as the following result shows.
We do not go into the details of the proof.

\begin{theorem} \label{frefl}
 Let $\sigma:G \dashrightarrow \mL$ be a regular support. The partitions 
$\mP(\omega_\sigma)$ and $\mP(\omega_{\sigma^*})$ are both
Fourier-reflexive and mutually dual.
\end{theorem}

Combining Example \ref{chsupp}, Theorem \ref{rpm} and Theorem \ref{frefl} we
obtain in particular the following result of \cite{zino}, which 
shows the existence of Fourier-reflexive partitions 
on any finite abelian group. 
\begin{corollary}[Fourier-reflexive partitions via subgroups] \label{partvia}
Let $(G,+)$ be a finite abelian group, and let 
 $\{ 0 \} = G_0 \subsetneq G_1
\subsetneq
 \cdots \subsetneq G_r=G$ be a chain of subgroups of $G$.
 Then
 $$\{0\} \sqcup \bigsqcup_{i=1}^r G_i\setminus G_{i-1}$$
 is a Fourier-reflexive partition of $G$ of cardinality $r+1$.
\end{corollary}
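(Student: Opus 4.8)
The plan is to specialize the chain-support construction of Example~\ref{chsupp} to the given chain of subgroups and then simply read off the induced partition. First I would equip the totally ordered set $G_0\varsubsetneq G_1\varsubsetneq\cdots\varsubsetneq G_r$ with the regular lattice structure $\mL$ of Example~\ref{chsupp}; its rank function satisfies $\rho_\mL(G_i)=i$ for all $i$. Let $\sigma:G\dashrightarrow\mL$ be the associated chain support, i.e. $\sigma(g)=G_i$ where $i=\min\{0\le j\le r:g\in G_j\}$. By Example~\ref{chsupp} this $\sigma$ is a genuine regular support (conditions \ref{uno}--\ref{quattro} of Definition~\ref{defsupp0} are readily checked for a chain of subgroups), so the machinery of Section~\ref{secmw} applies to it.

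Next I would compute the $\sigma$-weight explicitly. For $g\in G$ we have $\omega_\sigma(g)=\rho_\mL(\sigma(g))=\min\{0\le j\le r:g\in G_j\}$, so $\omega_\sigma(g)=0$ precisely when $g=0$, and $\omega_\sigma(g)=i$ precisely when $g\in G_i\setminus G_{i-1}$ for $1\le i\le r$. Hence the blocks $P_a$ of Definition~\ref{equiva} are $P_0=\{0\}$ and $P_i=G_i\setminus G_{i-1}$ for $1\le i\le r$, which means $\mP(\omega_\sigma)=\{0\}\sqcup\bigsqcup_{i=1}^r G_i\setminus G_{i-1}$ is exactly the partition in the statement.

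To finish, by Theorem~\ref{rpm} the pairs $(\omega_\sigma,\omega_{\sigma^*})$ and $(\omega_{\sigma^*},\omega_\sigma)$ are compatible, and by Remark~\ref{frefl} the partition $\mP(\omega_\sigma)$ is Fourier-reflexive (indeed mutually dual to the partition of $\hat G$ induced by the dual support $\sigma^*$, which comes from the dual chain $G_r^*\varsubsetneq\cdots\varsubsetneq G_0^*$). For the cardinality, note that since each inclusion $G_{i-1}\varsubsetneq G_i$ is strict, each $P_i$ is nonempty; the $r+1$ nonempty sets $P_0,\dots,P_r$ are pairwise disjoint and cover $G$, so the partition has cardinality $r+1$.

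I do not anticipate a genuine obstacle: all the substance is already contained in Example~\ref{chsupp}, Theorem~\ref{rpm} and Remark~\ref{frefl}, and the corollary is just the act of unwinding the definition of $\omega_\sigma$ for the chain support. The only points demanding a moment's care are the ``easy to check'' verification that a chain of subgroups yields a regular support, the identity $\rho_\mL(G_i)=i$ for the chain lattice, and the strictness of the inclusions, which is what pins the block count down to exactly $r+1$ rather than merely at most $r+1$.
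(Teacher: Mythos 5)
Your proposal is correct and follows exactly the route the paper intends: it specializes the chain support of Example~\ref{chsupp}, identifies the blocks of $\mP(\omega_\sigma)$ with $\{0\}$ and the sets $G_i\setminus G_{i-1}$, and invokes Theorem~\ref{rpm} together with Remark~\ref{frefl} for Fourier-reflexivity, which is precisely the combination the paper cites as its proof. The additional remarks on strictness of the inclusions (giving cardinality exactly $r+1$) are a harmless and correct elaboration.
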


The Fourier-reflexivity of the partitions constructed in the previous corollary was first shown
 in \cite[Theorem 6]{zino}. Notice however that our focus is  
on numerical weight functions and metric space structure associated with these 
partitions, which are not investigated in \cite{zino}. 
As we will see in the next section, the partitions of Corollary \ref{partvia} are
 induced by weight functions that endow the underlying group with a metric structure.

 This feature is relevant from a Coding Theory perspective.


\section{Metric Structures}
\label{secmetric}

 Under certain assumptions on the lattice $\mL$, the weight 
 $\omega_\sigma$ induced by a regular support 
 $\sigma:G \dashrightarrow \mL$ automatically induces 
a distance $d_{\omega_\sigma}$ on $G$. This is particularly
 interesting for applications in Coding Theory, as 
the triangle inequality is usually a key property enabling error correction.

Recall that a finite lattice $\mL=(L,\le,\wedge,\vee)$ is 
\textbf{modular}
if for all $S,T,U \in L$ with $S \le U$ one has $S \vee (T \wedge U)=(S \vee T) \wedge U$.
Notice that if $\mL$ is modular, then so is $\mL^*$. 

The following result shows that regular supports taking values 
in modular lattices induce a metric space structure on the underlying group. 

\begin{proposition} \label{metri}
 Let $\sigma:(G,+) \dashrightarrow \mL$ be a regular support.
If $\mL$ is modular, then the function $d_{\omega_\sigma}:G \times G \to \N$ defined by
$d_{\omega_\sigma}(g,g'):=\omega_\sigma(g-g')$ for all $g,g'\in G$ is a distance
function.
\end{proposition}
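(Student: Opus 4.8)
The plan is to check the three defining properties of a distance function for $d_{\omega_\sigma}(g,g'):=\omega_\sigma(g-g')=\rho_\mL(\sigma(g-g'))$ one at a time, drawing each from the axioms of a regular support in Definition \ref{defsupp0} together with the basic properties of the rank function recalled in Remark \ref{rk_def}, and isolating the single place where the modularity of $\mL$ is genuinely used.

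First I would dispatch the two easy properties. Non-negativity is immediate, since $\rho_\mL$ takes values in $\{0,\dots,\mbox{rk}(\mL)\}$. Symmetry follows from property \ref{unoprimo}: $\sigma(g-g')=\sigma\big(-(g-g')\big)=\sigma(g'-g)$, so $d_{\omega_\sigma}(g,g')=d_{\omega_\sigma}(g',g)$. For the identity of indiscernibles, note that $\sigma(0)=0_\mL$ has rank $0$, so $d_{\omega_\sigma}(g,g)=0$; conversely, if $d_{\omega_\sigma}(g,g')=0$ then $\rho_\mL(\sigma(g-g'))=0$, and since $0_\mL$ is the unique element of rank $0$ (Remark \ref{rk_def}) we get $\sigma(g-g')=0_\mL$, whence $g-g'=0$ by property \ref{uno}.

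The substantive step is the triangle inequality. Given $g,g',g''\in G$, set $a:=g-g'$ and $b:=g'-g''$, so that $a+b=g-g''$. Property \ref{due} gives $\sigma(a+b)\le \sigma(a)\vee\sigma(b)$, and monotonicity of $\rho_\mL$ yields
$$d_{\omega_\sigma}(g,g'')=\rho_\mL(\sigma(a+b))\le \rho_\mL\big(\sigma(a)\vee\sigma(b)\big).$$
Hence it suffices to establish the subadditivity $\rho_\mL(S\vee T)\le \rho_\mL(S)+\rho_\mL(T)$ for all $S,T\in L$, and then specialize to $S=\sigma(a)$, $T=\sigma(b)$ to obtain $d_{\omega_\sigma}(g,g'')\le d_{\omega_\sigma}(g,g')+d_{\omega_\sigma}(g',g'')$.

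The main obstacle — and the only point where modularity is invoked — is precisely this subadditivity. I would derive it from the classical fact that the rank function of a finite graded modular lattice is modular, i.e. $\rho_\mL(S\vee T)+\rho_\mL(S\wedge T)=\rho_\mL(S)+\rho_\mL(T)$ for all $S,T\in L$; this can be proved by a short induction on $\rho_\mL(S)+\rho_\mL(T)$ using the modular law together with the covering characterization of $\rho_\mL$, and is also available in standard references such as \cite{ec}. Since $\rho_\mL(S\wedge T)\ge 0$, the modular identity immediately gives $\rho_\mL(S\vee T)\le \rho_\mL(S)+\rho_\mL(T)$, which completes the argument. (In fact upper semimodularity of $\mL$ would already suffice for the inequality, but since the regular supports of interest live on modular lattices there is no reason to weaken the hypothesis.)
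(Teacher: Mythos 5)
Your proposal is correct and follows essentially the same route as the paper: the easy axioms come from properties \ref{uno} and \ref{unoprimo} of Definition \ref{defsupp0} together with Remark \ref{rk_def}, and the triangle inequality is obtained from property \ref{due}, monotonicity of $\rho_\mL$, and the modular rank identity $\rho_\mL(S \vee T)+\rho_\mL(S \wedge T)=\rho_\mL(S)+\rho_\mL(T)$ cited from \cite{ec}. The only (harmless) difference is organizational: you isolate subadditivity of $\rho_\mL$ as a separate lemma and remark that semimodularity would suffice, while the paper applies the modular identity inline.
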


\begin{proof}
Write $d:=d_{\omega_\sigma}$. Let $g,g'\in G$. By definition, 
$d(g,g')=0$ if and only if $\rho_\mL(\sigma(g-g'))=0$.
By the properties of $\rho_\mL$ (Remark \ref{rk_def}), this happens if and only if
$\sigma(g-g')=0$, i.e., by property \ref{uno} of Definition \ref{defsupp0}, if and only if $g=g'$.
By property \ref{unoprimo} of Definition \ref{defsupp0} we have 
$d(g,g')=\omega_\sigma(g-g')=\rho_\mL(\sigma(g-g'))=
\rho_\mL(\sigma(g'-g))=\omega_\sigma(g'-g)=d(g',g)$.
Now let $h,g,g' \in G$.
The rank function of a modular lattice $\mL=(L,\le,\wedge,\vee)$ satisfies
$\rho_\mL(S \vee T)=\rho_\mL(S)+\rho_\mL(T)-\rho(S \wedge T)$ for all
$S,T \in L$ (see \cite{ec}, page 287). 
Thus by property \ref{due} of Definition \ref{defsupp0} we have
$$d(g,g')=\omega_\sigma(g-g')=\omega_\sigma(g-h-(g'-h)) \le 
\rho_\mL(\sigma(g-h) \vee
\sigma(g'-h)) \le d(g,h)+d(h,g').$$
This concludes the proof.
\end{proof}

\begin{remark} \label{simult}
Assume that $\sigma:(G,+) \dashrightarrow \mL$ is a regular support, with $\mL$ modular. 
Then by Theorem~\ref{fund} the support $\sigma^*:(\hat{G},\cdot) \dashrightarrow \mL^*$ 
is regular as well, where $\mL^*$ is modular. Applying Theorem \ref{rpm} and 
Proposition \ref{metri} to $\sigma$ and 
$\sigma^*$, we obtain that the weights $\omega_\sigma$ and 
$\omega_{\sigma^*}$ are bi-compatible, and induce metric space structures 
on $G$ and $\hat{G}$, respectively. This constructs a pair of metric ambient 
spaces and, \textit{simultaneously}, yields MacWilliams identities for additive codes.
\end{remark}

The following example shows that the construction 
presented in Remark \ref{simult} can be explicitly realized over any finite 
abelian group, by choosing a suitable support lattice.

\begin{example}[Chain support, continued] \label{rispp3}
Let $(G,+)$ be a finite abelian group, and let $\mL$
be a chain 
$\{ 0 \} = G_0 \subsetneq G_1
\subsetneq
 \cdots \subsetneq G_r=G$
of subgroups of $G$
endowed with the lattice structure described in Example \ref{chsupp}.
Then $\mL$ is modular. Denote by  $\sigma:G \dashrightarrow \mL$
the associated chain support. As observed in Example~\ref{chsupp},
$\sigma$ is regular. Therefore
$d_{\omega_\sigma}$ 
is a 
distance on $G$ by Proposition \ref{metri}. By Theorem \ref{fund}, $\sigma^*$ is 
a regular support. Moreover, since $\mL$ is modular, $\mL^*$ is modular.
Hence by Proposition \ref{metri} $d_{\omega_{\sigma^*}}$ 
is a 
distance on $\hat{G}$.
By Theorem \ref{rpm},
$(\omega_\sigma,\omega_{\sigma^*})$ and $(\omega_{\sigma^*},\omega_\sigma)$ are
compatible pairs
such that both $d_{\omega_\sigma}$ and $d_{\omega_{\sigma^*}}$ are distance
functions.

We conclude the example giving a more explicit description of 
$\omega_{\sigma^*}$. Let $\nu$ be
the chain support
on the character group $\hat{G}$ associated to the chain
$\{ 1\} =G_r^* \subsetneq G_{r-1}^*\subsetneq \cdots \subsetneq
\hat{G}$. 
We claim that $\omega_\nu=\omega_{\sigma^*}$ (in particular, the dual 
of a chain support is a chain support as well). Indeed, as already mentioned in Example
\ref{chsupp}, for a fixed $\chi \in \hat{G}$ 
we have
$\sigma^*(\chi)=G_i$, where $i=\max\{ 0 \le j \le r : \chi \in G_j^*\}$.
Thus, by definition, $\omega_{\sigma^*}(\chi)=\rho_{\mL^*}(\sigma^*(\chi))=r-i$.
On the other hand, $$\omega_\nu(\chi)= \min\{0 \le j \le r : \chi \in G_{r-j}^*
\}=
r-\max \{0 \le j \le r : \chi \in G_j^*\}=r-i=\omega_{\sigma^*}(\chi).$$
\end{example}

\begin{remark}
If 
$(G,+)$ is a finite abelian group, and $d: G \times G \to \R$ is a distance 
on $G$, then $d$ can be extended to a distance function $d^n$ on 
the cartesian product $G^n$ by setting
$$d^n((g_1,...,g_n),(g_1',...,g_n')):= \sum_{i=1}^n d(g_i,g_i')
\ \ \ \ \ \ \mbox{for all }  (g_1,...,g_n), (g_1',...,g_n') \in G^n.$$
It is easy to check that $d^n$ is indeed a distance function. 
Therefore a regular support $\sigma$ on $G$ taking values in a 
modular lattice $\mL$ automatically produces metric space structures on both
$G^n$ and $\hat{G}^n$. 
\end{remark}


\section{MacWilliams Identities in Coding Theory} \label{recover}

In this section we show that many weight functions traditionally 
studied in Coding Theory are induced by suitable regular supports
up to equivalence.
We also apply Theorem \ref{rpm} to easily compute the
corresponding Krawtchouk
coefficients with a unified combinatorial method. Most of such coefficients 
have been computed by other authors employing 
\textit{ad hoc} techniques in the past. Theorem~\ref{rpm}
provides a general method that applies to different
contexts. Some connections between these examples of weights 
and the general theory of group partitions 
have been studied in \cite{heide1,zino0,zino}.

Observe moreover that the case of  the rank weight (Example \ref{rwh}) is particularly
interesting, as the standard method to compute the associated
Krawtchouk coefficients is quite sophisticated~\cite{del1}. 
Theorem \ref{rpm} allows to compute them in a simple
way, and to give them a precise combinatorial interpretation.

The following Examples \ref{exe1} and \ref{rwh} also show that the 
MacWilliams identities for codes 
endowed with the Hamming and the rank weight can be seen as 
two simple instances of the same result.

\begin{example}[Additive codes with the Hamming weight] \label{exe1}
 Let $n \ge 1$ be a positive integer, and let $[n]:=\{1,...,n\}$. Then
$\mL=(2^{[n]},\subseteq, \cap,\cup)$ is a regular lattice of rank $n$. 
The rank function of $\mL$ is  the cardinality of sets. The parameters
of $\mL$ are given by
$$\mu_{\subseteq}(s,t)= \binom{t}{s}, \ \ \ \ \ \mu_{\supseteq}(s,t)=
\binom{n-t}{s-t},
\ \ \ \  \ \mu_{\mL}(s,t)= \left\{ \begin{array}{cl}
                               (-1)^{t-s} & \mbox{if } s \le t \\
0 & \mbox{if } s >t \\
                              \end{array}
 \right.\ $$
for all $0 \le s,t \le n$. The formula for $\mu_{\mL}(s,t)$ can be easily proved
by induction on $t-s$ with the aid of the Binomial Theorem (page 24 of \cite{ec}). 
See \cite[Example 3.8.3]{ec} for a different proof using the product of
chains. Let $(G,+)$ be a
finite abelian group. 
Define the \textbf{Hamming support} 
$\sigma_{\textnormal{H}}:G^n \to 2^{[n]}$
by $\sigma_{\textnormal{H}}(g_1,...,g_n):=\{ i \in [n] : g_i \neq 0\}$
for all $(g_1,...,g_n) \in G^n$. It is a regular support. The weight induced on
$G^n$ by the Hamming support
is the \textbf{Hamming weight} $\omega_{\textnormal{H}}$.
For $S
\subseteq [n]$ and $(\chi_1,...,\chi_n) \in \hat{G}^n$ we have 
$(\chi_1,...,\chi_n) \in G^n_\sigma(S)^*$  if and only if $\chi_s$
is the trivial character of $G$ for all $s \in S$.
Therefore $\sigma_{\textnormal{H}}^*(\chi_1,...,\chi_n)= \{ i \in [n] : \chi_i \mbox{ is
trivial}\}$. It follows $$\omega_{\sigma_{\textnormal{H}}^*}(\chi_1,...,\chi_n)
=n-|\{ i \in [n] : \chi_i \mbox{ is trivial}\}|=|\{ i \in [n] : \chi_i \mbox{ is
not trivial}\}|.$$
Thus in the following we write
$\omega_{\sigma_{\textnormal{H}}^*}=\omega_{\textnormal{H}}$.
 Theorem 
\ref{rpm} allows to compute the Krawtchouk coefficients for the Hamming weight
as
$$K(\omega_{\textnormal{H}},\omega_{\textnormal{H}})(i,j)=\sum_{s=0}^n
(-1)^{j-s} \ 
|G|^s \ \binom{n-i}{s} \binom{n-s}{j-s}$$
for all $0 \le i,j \le n$. By Theorem \ref{mwwide}, for every 
code $\mC \subseteq G^n$ and for all $0 \le j \le n$ we have
$$W_j(\mC^*,\omega_\textnormal{H})= \frac{1}{|\mC|} \sum_{i=0}^n
W_i(\mC,\omega_\textnormal{H}) \sum_{s=0}^n (-1)^{j-s} \ 
|G|^s \ \binom{n-i}{s} \binom{n-s}{j-s}.$$
These are the ``MacWilliams identities for the
Hamming weight over a group''.
\end{example}

\begin{example}[Linear codes with the Hamming weight]
Take $G=\F_q$ in Example \ref{exe1}. Define the \textbf{orthogonal code} of a linear
code $\mC \subseteq \F_q^n$ by $\mC^\perp:=\{ v \in \F_q^n : \langle w,v \rangle =0 
\mbox{ for all } w \in \mC \}$, where $\langle \cdot , \cdot \rangle$ is the standard inner product
of $\F_q^n$. One can show that $W_j(\mC^\perp,
\omega_\textnormal{H})=W_j(\mC^*,\omega_\textnormal{H})$
for all linear codes $\mC \subseteq \F_q^n$ (for a proof, see 
the following Example \ref{rwh}, where the same property is established for 
the more complicated case of rank-metric codes). By Example \ref{exe1}, for 
all $0 \le j \le n$ 
we have
 $$W_j(\mC^\perp,\omega_\textnormal{H})= \frac{1}{|\mC|} \sum_{i=0}^n
W_i(\mC,\omega_\textnormal{H}) \sum_{s=0}^n (-1)^{j-s} \ 
|G|^s \ \binom{n-i}{s} \binom{n-s}{j-s}.$$
These are the  ``MacWilliams identities for linear codes with the
Hamming weight''.  See for instance Chapter 5 of \cite{MS} or Chapter 7 of
\cite{pless} for equivalent formulations. 
\end{example}

\begin{example}[Modified exact weight]
Let $(G,+)$ be a non-trivial finite abelian group. Denote by $\sigma$ the chain
support on $G$ 
associated to the chain $\{0\} \subsetneq G$ (see Example \ref{chsupp}). 
Let $\omega_\sigma:G \to \{ 0,1\}$ be the induced weight.
By the second part of Example \ref{rispp3},  $\omega_{\sigma^*}$ is
the weight
on $\hat{G}$ induced by the chain support associated to the chain
$\{1\} \subsetneq \hat{G}$. 
If $n \ge 2$  and $G=\F_2$, then the $n$-th product weight of  
$\omega_\sigma$ is the \textbf{exact weight} on $\F_2^n$ 
(see \cite[page 147]{MS}).
For a general $G$, we obtain a weight that partitions the elements of 
$G^n$
according to the positions of their non-zero entries.
With the aid of  Theorem~\ref{rpm} and Example \ref{chsupp} one 
computes the 
Krawtchouk coefficients
for $(\omega_\sigma, \omega_{\sigma^*})$ and
$(\omega_{\sigma^*},\omega_\sigma)$  as
$$K(\omega_\sigma,\omega_{\sigma^*})(i,j)=
K(\omega_{\sigma^*},\omega_\sigma)(i,j)= 
\left\{\begin{array}{cl}
        1 & \mbox{ if } j=0 \\
-1 & \mbox{ if } j =1 \mbox{ and } i=1 \\
|G|-1 & \mbox{ if } j = 1 \mbox{ and } i=0 
       \end{array}
 \right.$$
for  all $i,j \in \{ 0,1\}$. 
Proposition \ref{ext}
 also allows to compute
the coefficients for
the product and symmetrized weights.
\end{example}

\begin{example}[Linear codes with the rank weight] \label{rwh}
 Let $1 \le k \le m$ be integers, and let $G:= \mbox{Mat}$ be the vector space
of $k \times m$ matrices over $\F_q$. Denote by $L$ the set of all subspaces
of
$\F_q^k$. Then $\mL=(L, \subseteq, \cap, +)$ is a regular lattice
of rank $k$. Notice that the join is  the sum of subspaces. The rank
function of $\mL$ is given by $\rho_\mL(V)=\dim(V)$ for all
 $V \subseteq \F_q^k$ (see \cite[page 281]{ec}).
The parameters of $\mL$ are, for all $0 \le s,t \le k$,
$$\mu_{\subseteq}(s,t)= {t \brack s}, \ \ \ \ \ \mu_{\supseteq}(s,t)= {k-t \brack s-t},
\ \ \ \ \ \mu_{\mL}(s,t)=\left\{ \begin{array}{cl}
                               (-1)^{t-s} q^{\binom{t-s}{2}} & \mbox{if } s \le
t \\
0 & \mbox{if } s > t, \\
                              \end{array}
 \right.\ $$
where the symbols in squared brackets are the $q$-ary binomial coefficients
(see, e.g., \cite{andr}). The formula for $\mu_{\mL}(s,t)$ can be  proved
by induction on $t-s$
with the aid of the Gaussian Binomial Theorem (\cite{ec}, equation (1.87) on page 74).
An elegant argument that uses the fact that $\mL$ is a
geometric lattice
can be found in
\cite[Example 3.10.2]{ec}.
Denote by $\mbox{colsp}(M) \subseteq \F_q^k$ the space generated by the
columns of a matrix $M \in \mbox{Mat}$.
Then $M \mapsto \mbox{colsp}(M)$ is a 
regular support 
$\sigma_{\textnormal{rk}}:\mbox{Mat} \dashrightarrow \mL$ with
$\gamma_\sigma(s)=q^{ms}$ for all $0 \le s \le k$ (see \cite[Lamma 26]{al}).
It is called the \textbf{rank support}.
Let $\omega_{\textnormal{rk}}:=\omega_{\sigma_{\textnormal{rk}}}$
be the \textbf{rank weight}, and set $\omega^*_{\textnormal{rk}}:=
\omega_{\sigma^*_{\textnormal{rk}}}$
for ease of notation. Note that 
$\omega_{\sigma_{\textnormal{rk}}}(M)=\mbox{rk}(M)$ for
all $M \in \mbox{Mat}$. By Theorem \ref{rpm},  the Krawtchouk coefficients
of
$(\omega_{\textnormal{rk}}, \omega^*_{\textnormal{rk}})$ and 
$(\omega^*_{\textnormal{rk}}, \omega_{\textnormal{rk}})$
are
\begin{equation}K({\omega_{\textnormal{rk}}, \omega^*_{\textnormal{rk}}})(i,j)= 
K({\omega^*_{\textnormal{rk}}, \omega_{\textnormal{rk}}})(i,j)
=\sum_{s=0}^k  (-1)^{j-s} \
q^{ms+\binom{j-s}{2}}  \ 
 {{k-s} \brack {k-j}} {{k-i} \brack {s}} \label{kkr}, \ \ \ \ 
0 \le i,j \le k.  \end{equation}

Recall that the \textbf{trace-product} of matrices $M,N \in \mbox{Mat}$ is
$\langle M,N \rangle := \mbox{Tr}(MN^t)$,
where $\mbox{Tr}$ is the trace of matrices, and the superscript $t$ denotes 
transposition.
The \textbf{orthogonal code} of a linear code $\mC \subseteq \mbox{Mat}$ is
$\mC^\perp:=\{ M \in \mbox{Mat} : \langle N,M\rangle =0 \mbox{ for all } N \in \mC\}$.
It can be shown that if 
$\mC \subseteq \mbox{Mat}$ is a linear code, then $W_j(\mC^\perp,\omega_{\textnormal{rk}})=
W_j(\mC^*,\omega_{\textnormal{rk}}^*)$
for all $0 \le j \le k$ (see below). Therefore combining Theorem \ref{mwwide} and equation
(\ref{kkr}) 
we obtain
$$W_j(\mC^\perp,\omega_{\textnormal{rk}})=\frac{1}{|\mC|} \sum_{i=0}^k
W_i(\mC,\omega_{\textnormal{rk}})
 \sum_{s=0}^k (-1)^{j-s} \
q^{ms+\binom{j-s}{2}}  \ 
{{k-s} \brack {k-j}} {{k-i} \brack {s}}$$
for all $0 \le j \le k$. These are the ``MacWilliams identities for linear
codes 
with the rank weight'',
first established by Delsarte in \cite{del1}.
Rank-metric codes were recently re-discovered for applications in
linear network coding (see \cite{metrics}).

We conclude this example showing that if 
$\mC \subseteq \mbox{Mat}$ is an $\F_q$-linear code, then for all 
$0 \le j \le k$ we have $W_j(\mC^\perp,\omega_{\textnormal{rk}})=
W_j(\mC^*,\omega_{\textnormal{rk}}^*)$. Fix a non-trivial 
character $\chi \in \hat{\F}_q$, and define the group 
isomorphism
$$f:\mbox{Mat} \to \widehat{\mbox{Mat}}, \ \ \ \ \ \ \ 
f(M)(N):=\chi(\mbox{Tr}(MN^t)) \ \ \mbox{for all $M,N \in \mbox{Mat}$}.$$
It is easy to see that for any linear code $\mC \subseteq \mbox{Mat}$ we have
\begin{equation} \label{sends}
f(\mC^\perp)=\mC^*.
\end{equation}

 Let 
$g:L \to L$ be the map that sends an $\F_q^k$-subspace $U$ to 
its dual $U^\perp$ with respect to the standard 
inner product of $\F_q^k$. Note that $g=g^{-1}$. For all 
$M \in \mbox{Mat}$ we have
$$\sigma_{\textnormal{rk}}^*(f(M))=\bigvee \{ S \in L : f(M) \in 
\mbox{Mat}_{\sigma_{\textnormal{rk}}}(S)^*\}=\bigvee \{ S \in L : M \in 
\mbox{Mat}_{\sigma_{\textnormal{rk}}}(S)^\perp\},$$
where the last equality follows from~(\ref{sends}) applied to
the linear code $\mbox{Mat}_{\sigma_{\textnormal{rk}}}(S)$.
By~\cite[Lemma 27]{al} and the definition of $g$ one has
$\mbox{Mat}_{\sigma_{\textnormal{rk}}}(S)^\perp = 
\mbox{Mat}_{\sigma_{\textnormal{rk}}}(S^\perp)=
\mbox{Mat}_{\sigma_{\textnormal{rk}}}(g(S))$. Therefore
$$\sigma_{\textnormal{rk}}^*(f(M)) = \bigvee \{ S \in L : M \in 
\mbox{Mat}_{\sigma_{\textnormal{rk}}}(g(S))\} = \bigvee \{ S \in L : S \subseteq 
g(\sigma_{\textnormal{rk}}(M))\}=g(\sigma_{\textnormal{rk}}(M)).$$ 
As a consequence, $g(\sigma_{\textnormal{rk}}^*(f(M)))=\sigma_{\textnormal{rk}}(M)$. Thus 
all the arrows in the following 
diagram commute, showing that the $\omega_{\textnormal{rk}}$-distribution of 
$\mC^\perp$ coincides with the $\omega_{\textnormal{rk}}^*$-distribution of $\mC^*$.
$$\xymatrix{
\mbox{Mat} \ar[rr]^{f} \ar[d]_{\sigma_{\textnormal{rk}}} &  
 & \widehat{\mbox{Mat}} \ar[d]^{\sigma_{\textnormal{rk}}^*}\\
L   \ar[rd]_{\rho_\mL}& & L  \ar[ll]^{g} \ar[ld]^{\rho_{\mL^*}} \\
& \N & 
}$$
\end{example}

\begin{example}[Lee weight on $\Z_4$]
The \textbf{Lee weight} on $\Z_4$ is the function
$\omega_{\textnormal{Lee}}:\Z_4 \to \{ 0,1,2\} \subseteq \N$ defined by
$\omega_{\textnormal{Lee}}(0):=0$, $\omega_{\textnormal{Lee}}(1)=
\omega_{\textnormal{Lee}}(3):=1$
and $\omega_{\textnormal{Lee}}(2):=2$. See \cite{z4} and \cite{leew} or 
Chapter 12 of \cite{pless} and the references within.
Denote by $\sigma$ the chain support on $\Z_4$ associated to the chain
$\{ 0\} \subsetneq \Z_2 \subsetneq \Z_4$. Then 
$\omega_\textnormal{Lee} \sim \omega_\sigma$.
Let $\zeta \in \C$ be a primitive fourth root of unity. Define the map 
$\psi:\Z_4 \to \hat{\Z}_4$ by
$\psi(a)(b):=\zeta^{ab}$ for all $a,b \in \Z_4$. Then $\psi$ is a group 
isomorphism, and it is natural to define the \textbf{Lee weight} on
$\hat{\Z}_4$ by $\omega_\textnormal{Lee}^*:=
\omega_\textnormal{Lee} \circ \psi^{-1}$. A direct computation shows
$\omega_{\sigma}=\omega_{\sigma^*} \circ \psi$, and therefore
$\omega_\textnormal{Lee}^*=
\omega_\textnormal{Lee} \circ \psi^{-1} \sim \omega_\sigma \circ \psi^{-1}=
\omega_{\sigma^*} \circ \psi \circ \psi^{-1}=\omega_{\sigma^*}$. Thus the 
Krawtchouk coefficients associated to 
$(\omega_\textnormal{Lee},\omega_\textnormal{Lee}^*)$
are the same as the Krawtchouk coefficients associated to 
$(\omega_\sigma,\omega_{\sigma^*})$,
up to a permutation. They can be explicitly computed combining 
Example \ref{chsupp} and Theorem \ref{rpm} as follows. We write
$K_\textnormal{Lee}$ for $K(\omega_\textnormal{Lee},\omega_\textnormal{Lee}^*)$.
\begin{align*}
 K_\textnormal{Lee}(0,0) &= 1 &  
K_\textnormal{Lee}(0,1) &= 2 &   
K_\textnormal{Lee}(0,2) &= 1 \\
K_\textnormal{Lee}(1,0) &= 1 &  
K_\textnormal{Lee}(1,1) &= 0 &   
K_\textnormal{Lee}(1,2) &= -1 \\
K_\textnormal{Lee}(2,0) &= 1 &  
K_\textnormal{Lee}(2,1) &= -2 &   
K_\textnormal{Lee}(2,2) &= 1.
\end{align*}
Proposition \ref{ext} also allows to compute the
Krawtchouk coefficients for the 
\textbf{symmetrized Lee weight} on the product group $\Z_4^n$, for $n \ge 1$ (see e.g. \cite{z4}).
\end{example}

\begin{example}[Homogeneous weight on certain Frobenius rings] \label{hhhhh}
 We denote the socle and the Jacobson radical of a finite (possibly non-commutative) 
Frobenius
ring $R$ by $\mbox{soc}(R)$ and $\mbox{rad}(R)$, respectively.
See \cite[Chapter 16]{lam}
for the main properties of Frobenius rings, or \cite{gref} and
\cite{heide2} for a 
Coding Theory approach. 
It is known that  $\mbox{rad}(R)$ is a two-sided ideal, and that 
$\mbox{soc}(R)
\cong R/\mbox{rad}(R)$ as left and right $R$-modules. Moreover,
if $R$ is local, i.e., $\mbox{rad}(R)$ is the unique maximal left and right
ideal of $R$, then $R/\mbox{rad}(R)$ is a field, called the \textbf{residue
field}.

Let $R:=R_1 \times R_2 \times \cdots \times R_n$, where each $R_i$ is 
a finite local Frobenius ring. Then  
 $R_i/\mbox{rad}(R_i) \cong \mbox{soc}(R_i)$
 as left and right $R_i$-modules. We assume that all the residue
 fields
 $R_i/\mbox{rad}(R_i)$ have the same 
 order $q$. Then $R$ is Frobenius with 
$\mbox{soc}(R)= \prod_{i=1}^n \mbox{soc}(R_i)$.
The values of the \textbf{homogeneous weight} $\omega_\textnormal{hom}:R
\to \R$ (see \cite{cost,gref,hon}) 
on $R$ were explicitly computed in
\cite[Proposition 3.8]{heide2} as
$$\omega_{\textnormal{hom}}(a)= \left\{ \begin{array}{cl}
 1-{\left( \frac{-1}{q-1}\right)}^{\textnormal{wt}(a)} & \mbox{if } a \in \mbox{soc}(R) \\
1 & \mbox{otherwise,} \end{array}
\right.\ $$
where $\mbox{wt}(a):= |\{ 1 \le i \le n : a_i \neq 0 \}|$ is the  weight of
$a=(a_1,...,a_n)$.

From now on we assume $q \ge 3$. In particular,
we have $\omega_{\textnormal{hom}}(a)=0$
if and only if $a=0$.
Let $[n+1]:=\{ 1,...,n+1\}$, and 
$L:= \{ S \subseteq [n+1] : n+1 \notin S\} \cup \{ [n+1]\}$.
Then $\mL=(L,\subseteq , \cap,\cup)$ is a regular lattice of rank $n+1$, where
the rank 
function is
given by the cardinality of sets. It is easy to see that the parameters of $\mL$ are,
for all $0 \le s,t \le n+1$,
\begin{equation*}\mu_{\subseteq}(s,t)= \left\{ \begin{array}{cl}
\binom{t}{s} & \mbox{ if } s \le t \le n \\ 
\binom{n}{s} & \mbox{ if } s \le n, t=n+1 \\ 
1 & \mbox{ if } s=t=n+1 \\
0 & \mbox{ if } s>t,\end{array}
\right.\ 
 \ \ \ \ \ \ \ \ 
\mu_{\supseteq}(s,t)= \left\{ \begin{array}{cl}
\binom{n-t}{s-t} & \mbox{ if } t \le s \le n \\ 
1 & \mbox{ if } t \le s=n+1 \\ 
0 & \mbox{ if } s<t,\end{array}
\right.\ 
\end{equation*}

\begin{equation*}
\mu_\mL(s,t)= \left\{ \begin{array}{cl}
                       {(-1)^{t-s}} & \mbox{if } s \le t \le n \\
0 & \mbox{if } t<s, \mbox{ or } t=n+1 \mbox{ and } s < n  \\
-1 & \mbox{if } t=n+1, s=n.
                      \end{array}
\right.\  
\end{equation*}
The formula for $\mu_\mL(s,t)$ can be proved by induction on $t-s$
using the Binomial Theorem, as in Example \ref{exe1}.
Define $\sigma:R \to L$ by
$\sigma(a):=[n+1]$ if $a \notin \mbox{soc}(R)$, and 
$\sigma(a):=\{ 1 \le i \le n : a_i \neq 0\}$ if $a \in \mbox{soc}(R)$. One can
check that
$\sigma:R \dashrightarrow \mL$
is a regular support with
$$\gamma_\sigma(s)= \left\{ \begin{array}{cl}
                             q^s & \mbox{if } s \le n \\
|R| & \mbox{if } s=n+1
                            \end{array}
\right.\ $$
for all $0 \le s \le n+1$. 
Moreover, $\omega_\sigma \sim \omega_{\textnormal{hom}}$. 
 By Theorem \ref{frefl}, in the language  
 of~\cite{heide1}
we have
$$\widehat{\mP(\omega_{\textnormal{hom}})}=\mP(\omega_{\sigma^*}).$$
Therefore the Krawtchouk matrix ${\bf{K}}$ associated to the homogeneous weight
partition
(see Section~4 of~\cite{heide2} for the definition) is given by
\begin{equation} \label{kmat}
 {\bf{K}}_{ij}:= K(\omega_{\sigma^*},\omega_\sigma)(i,j)
\end{equation}
for all $0 \le i,j \le n+1$. 

When $n=1$, the ring $R=R_1$ is a finite local Frobenius ring, and with 
the aid of Theorem~\ref{rpm} one can easily compute 
$${\bf{K}}= \begin{bmatrix}
             1 & q-1 & |R|-q \\ 1 & q-1 & -q \\ 1 & -1 & 0
            \end{bmatrix}.
$$
The same matrix appears in \cite{camion} and \cite{heide2}  
for $R=\Z_8$. 

Combining equation (\ref{kmat}) and Theorem \ref{rpm},
one obtains new explicit formul{\ae}  for the Krawtchouk 
coefficients associated to the homogeneous weight, along with a 
combinatorial interpretation for them. 
Since $\mL$ is modular, 
by
Proposition \ref{metri} the weight function $\omega_\sigma$
automatically induces a distance function on $R$.

Note that for some simple Frobenius rings it is possible to express the homogenous weight via a suitable
chain support on the ring. For example, the homogeneous weight on a finite local Frobenius ring
$R$ is equivalent to the chain support associated to
the chain $0 \subsetneq \mbox{soc}(R) \subsetneq R$
 (see \cite{eb} or \cite{heide2} for the values of the homogeneous weight on
such rings).

It is known (see \cite[Section 4]{heide2}) that the partition induced by 
the homogeneous weight on more general Frobenius rings is not 
Fourier-reflexive. In particular, there is no regular support defined on 
these rings that induces the homogeneous weight. This is the case, for 
example, of the ring 
$\Z/546\Z$ (see \cite[Example 4.6]{heide2} for details).
\end{example}


\section{Extremality} \label{secop}

In this section we study subsets $\mC \subseteq G$ that are not necessarily subgroups of $G$.
We consider a slightly more general setting than the one we investigated in the previous sections, 
relaxing the definition of regular support (see the following Notation \ref{not2}).
We establish a generalized Singleton bound for subsets $\mC \subseteq G$. We 
call ``extremal'' the codes attaining the Singleton bound. This yields a cardinality-related 
notion of extremality for codes in groups, that 
extend the concept of MDS code and MRD rank-metric code. 
Our specific interest in codes' cardinality is motivated by the fact that 
this fundamental parameter reflects the code's rate.

We show that if $\mC$ is an extremal set, then the weight distribution of any translate of $\mC$ and 
the distance distribution of $\mC$ coincide. Moreover, they can be expressed in terms 
of the combinatorial invariants
of the underlying lattice. 
Finally, we prove that if $\mC$ is an extremal subgroup
(i.e., an extremal code), then the dual code $\mC^*$ is extremal as well.

\begin{remark}
Our approach extends classical results on the distance distributions of extremal codes 
to the weight distributions of their translates. Notice that full knowledge of the weight 
distributions of the translates of a code gives information on the distance distribution 
of the code itself. However, the converse is not true in general, and the weight 
distributions of the translates of a code need  an independent analysis. 
For MDS codes endowed with the Hamming weight, this analysis is carried 
out e.g. in \cite{B90} with an elegant simple argument.
\end{remark}

\begin{notation} \label{not2}
In this section $(G,+)$ is a finite abelian group, and 
 $\mL=(L,\le,
\wedge,\vee)$ denotes a finite graded lattice of rank $r$ that satisfies property
\ref{AAA} of Definition \ref{defmoreg}. Moreover,
$\sigma:G \to L$ is a function that satisfies properties 
\ref{uno}, \ref{unoprimo}, \ref{due} and \ref{quattro} 
of Definition \ref{defsupp0}. We simply denote by $\omega:G\to \{0,...,r\}$ the function 
defined by $\omega(g):=\rho_{\mL}(\sigma(g))$ for all $g \in G$.
We follow the notation of the previous sections, unless specified differently. In particular, we set 
$\mC_\sigma(S):=\{g \in \mC : \sigma(g) \le S\}$
for any (possibly non-additive) subset $\mC \subseteq G$ and $S \in L$.
\end{notation}

In the sequel we investigate combinatorial properties of subsets 
$\mC \subseteq G$ that are not necessarily subgroups of $G$.

\begin{definition}
 Let $\mC \subseteq G$ be any subset with $|\mC| \ge 2$.
 The \textbf{minimum weight} and the 
 \textbf{minimum distance}
 of $\mC$ are, respectively, 
 $$w_\omega(\mC):=\min \{ \omega(g) : g \in \mC, \ g \neq 0 \}, 
 \ \ \ \ \ d_\omega(\mC):=\min \{ \omega(g-g') : g,g' \in \mC, \ g \neq g' \}.$$
 The \textbf{weight} and \textbf{distance distributions} of $\mC$ are
  the integer vectors 
  $(W_i(\mC,\omega) : i=0,...,r)$ and $(D_i(\mC,\omega) : i=0,...,r)$, respectively, 
  where $$W_i(\mC,\omega):=|\{ g \in \mC : \omega(g)=i\}|, \ \ \ \ \ 
  D_i(\mC,\omega):=\frac{1}{|\mC|} \ |\{ (g,g') \in \mC^2 : \omega(g-g')=i\}|$$
for all $i \in \{0,...,r\}$.
\end{definition}

Notice that we do not require the map $G \times G \to \{0,...,r\}$ given by $(g,g')\mapsto \omega(g-g')$
to be a distance function on $G$.

\begin{remark}
 It is easy to check that if $\mC \subseteq G$ is a subgroup (i.e., a code) then 
  $W_i(\mC,\omega)=D_i(\mC,\omega)$
 for all $i=0,...,r$. In particular, if $|\mC| \ge 2$ then $w_\omega(\mC)=d_\omega(\mC)$.
\end{remark}

We start with a Singleton-type bound of combinatorial flavor.

\begin{proposition}\label{LAsbound}
 Let $\mC \subseteq G$ be a subset with $|\mC| \ge 2$. We
have $|\mC| \le |G|/\gamma_\sigma(d_\omega(\mC)-1)$.
\end{proposition}

\begin{proof}
 Take any $S \in L$ with $\rho_\mL(S)=d_\omega(\mC)-1$. Such an $S$ always exists
 by definition of rank of a graded poset. For all $g \in \mC$ define  
 $$[g]:=g+G_\sigma(S)=\{g+h : h \in G_\sigma(S)\}\subseteq G.$$
 By definition of minimum distance we have $[g] \cap [g']=\emptyset$
 for all $g,g' \in \mC$ with $g \neq g'$. Therefore 
 \begin{equation*}|G| \ge \left| \bigcup_{g \in \mC} [g] \right|= \sum_{g \in \mC} |[g]|
 = \sum_{g \in \mC} |G_\sigma(S)| = |\mC| \cdot \gamma_\sigma(d_\omega(\mC)-1),
 \end{equation*}
 and the bound follows.
 \qedhere
 \end{proof}

 \begin{definition}
  A subset $\mC \subseteq G$ is  \textbf{extremal} if
  $|\mC| \ge 2$ and
it attains the bound of Proposition \ref{LAsbound}. 
 \end{definition}

The remainder of the section is devoted to the combinatorial properties of extremal sets.
We start with a preliminary result.

 \begin{lemma}\label{LAprell}
  Let $\mC \subseteq G$ be an extremal subset. Let 
  $S \in L$ be any element with $s:=\rho_\mL(S) \ge d_\omega(\mC)$. Then 
  $$|\mC_\sigma(S)|=\frac{|\mC| \ \gamma_\sigma(s)}{|G|}.$$
   \end{lemma}

 \begin{proof}
  Let
  $T \in L$ with $T \le S$ and $\rho_\mL(T)=d_\omega(\mC)-1$.
  Such a $T$ always exists by definition of graded posets.
  We clearly have $G_\sigma(T) \subseteq G_\sigma(S)$.
  Define the maps 
  $$\mC \stackrel{\pi_1}{\longrightarrow} G/G_\sigma(T) 
  \stackrel{\pi_2}{\longrightarrow}
  G/G_\sigma(S)$$
  as follows.
  The function $\pi_1$ is the composition of the inclusion $\mC \to G$ and the projection 
  on the quotient group $G \to G/G_\sigma(T)$.
  The map $\pi_2$ is given by $g+G_\sigma(T) \mapsto g+G_\sigma(S)$, and it is a well defined 
  group homomorphism, as $G_\sigma(T) \subseteq G_\sigma(S)$.

  We claim that $\pi_1$ is a bijection. Indeed, assume that there exist 
  $g,g' \in \mC$ with $\pi_1(g)=\pi_1(g')$, i.e., $g+G_\sigma(T)=g'+G_\sigma(T)$.
  Then $g-g' \in G_\sigma(T)$, hence $\omega(g-g') \le \rho_\mL(T)=d_\omega(\mC)-1$.
  It follows $g=g'$, i.e., $\pi_1$ is injective. Since $\mC$ is extremal, we have 
  $|\mC|=|G|/\gamma_\sigma(d_\omega(\mC)-1)=|G|/G_\sigma(T)$, and so $\pi_1$ is a bijection,
  as claimed.
  
  Since both $\pi_1$ and $\pi_2$ are surjective, the map $\pi:=\pi_2 \circ \pi_1$ 
   is surjective as well. Moreover, as $\pi_1$ is bijective and 
   $\pi_2$ is a surjective group homomorphism,
   we have 
   $|\pi^{-1}(0)| = |\pi^{-1}(x)|$ for all $x \in G/G_\sigma(S)$. Therefore
   $$|\mC|= \left| \bigcup_{x \in G/G_\sigma(S)} \pi^{-1}(x) \right|=
   \sum_{x \in G/G_\sigma(S)} |\pi^{-1}(x)| =
   \sum_{x \in G/G_\sigma(S)} |\pi^{-1}(0)| = \frac{|G|}{\gamma_\sigma(s)} \cdot |\mC_\sigma(S)|,$$
   where the last equality follows from the definition of $\mC_\sigma(S)$.
  This shows the lemma.
   \end{proof}

 \begin{theorem} \label{LAfff}
  Let $\mC \subseteq G$ be an extremal subset of minimum distance 
  $d:=d_\omega(\mC)$ and $0 \in \mC$. 
  Define the integer matrix $P$ of size $(r-d+1) \times (r-d+1)$ by
  $P_{ij}:=\mu_\ge(d+i-1,d+j-1)$ for all 
  $i,j \in \{1,...,r-d+1\}$. Then $P$ is invertible, and the weight distribution of 
  $\mC$ is given by 
  $$
     W_0(\mC,\omega)=1, \ \ W_i(\mC,\omega)=0 \mbox{ for } 1 \le i \le d-1, $$ $$ 
     \left( \begin{array}{c}
     W_d(\mC,\omega) \\ W_{d+1}(\mC,\omega) \\ \vdots \\ W_r(\mC,\omega)\end{array} \right)=
     P^{-1} \left( \begin{array}{c}
     |\mC| \ \mu_\le(d,r) \gamma_\sigma(d)/ |G| -\mu_\ge(d,0)\\ |\mC| \ 
     \mu_\le(d+1,r) \gamma_\sigma(d+1)/ |G| - \mu_\ge(d+1,0)\\ \vdots \\
     |\mC| \ \mu_\le(r,r) \gamma_\sigma(r)/ |G| - \mu_\ge(r,0)\end{array} \right).    
$$
  In particular,  the weight distribution of $\mC$ only depends on $|G|$, $d_\omega(\mC)$,
  and on the combinatorial invariants of $\mL$ and $\sigma$.
 \end{theorem}
 
 \begin{proof}
  Take any $s \in \{d,...,r\}$ and write $d:=d_\omega(\mC)$. 
  We will count the elements of the set 
  $$\mA:=\{(g,S) : g \in \mC, \ S \in L, \ \rho_\mL(S)=s, \ \sigma(g) \le S\}$$
  in two different ways. On the one hand, by Lemma \ref{LAprell} we have 
 $$|\mA|=\sum_{\substack{S \in L \\ \rho_\mL(S)=s}} |\mC_\sigma(S)|=\mu_\le(s,r) \ \frac{|\mC| \ 
 \gamma_\sigma(s)}{|G|}.$$
 Since $0 \in \mC$, by definition of 
 $\mA$ we have
 $$|\mA|=\sum_{i=0}^s \sum_{\substack{g \in \mC \\ \omega(g)=i}} |\{S \in L :  
 \rho_\mL(S)=s, \ \sigma(g) \le S\}|=
\mu_\ge(s,0) + \sum_{i=d}^s W_i(\mC,\omega) \ \mu_{\ge}(s,i).$$
 Therefore 
 \begin{equation} \label{xxx}
 \sum_{i=d}^s W_i(\mC,\sigma) \ \mu_\ge(s,i) \ = \ \mu_\le(s,r) \ \frac{|\mC| \ 
 \gamma_\sigma(s)}{|G|} - \mu_\ge(s,0),
 \ \ \ \ \ s \in \{d,...,r\}.
 \end{equation}
 Observe that (\ref{xxx}) is a system of $r-d+1$ linear equations in the 
 unknowns $W_d(\mC,\omega),...,W_r(\mC,\omega)$. The matrix of the system is precisely 
 $P$, which is lower triangular with all ones on the diagonal.
 \end{proof}

 \begin{remark}
 Following the notation of 
 Theorem \ref{LAfff}, when $\mL$ also satisfies property \ref{BBB} of Definition~\ref{defmoreg}
 the weight distribution of an extremal subset $\mC \subseteq G$  with $d=d_\omega(\mC)$ and $0 \in \mC$ can be expressed as
 $$W_i(\mC,\sigma)= \sum_{s=0}^{d-1} \mu_\mL(s,i) \ \mu_\le(s,i) + 
 \sum_{s=d}^i \mu_\mL(s,i) \ \mu_\le(s,i) \ \frac{|\mC| \ \gamma_\sigma(s)}{|G|},\ \ \ \  d \le i \le r.$$
  We do not go into the details of the proof.
  In the context of codes endowed with the rank metric, to our best 
  knowledge the previous formula
is new. As we will see in  Corollary \ref{LAccc}, it generalizes a result by Delsarte 
on the distance distribution of extremal codes (see \cite[Theorem 5.6]{del1}).
 \end{remark}

 If $\mC \subseteq G$ is a non-empty subset and $h\in G$, we define the \textbf{translate} of $\mC$ by 
 $h$ as  the set $\mC_h:=\{g-h : g \in \mC\} \subseteq G$.

 \begin{corollary} \label{LAccc}
  Let $\mC \subseteq G$ be an extremal subset of minimum distance 
  $d:=d_\omega(\mC)$. For all $h \in \mC$ we have 
  $W_i(\mC_h,\omega)=D_i(\mC,\omega)$ for $i \in \{0,...,r\}$.
 \end{corollary}

 \begin{proof}
   For all  $i \in \{0,...,r\}$ one has
  $$D_i(\mC,\omega) = \frac{1}{|\mC|} \ |\{ (g,g') \in \mC^2 :
\omega(g-g')=i\}| =
  \frac{1}{|\mC|} \sum_{g' \in \mC} |\{ g \in \mC : \omega(g-g')=i\}| =
    \frac{1}{|\mC|} \sum_{g' \in \mC} W_i(\mC_{g'},\omega).$$
    Let $h \in \mC$ be any element. It is easy to see that the
translate $\mC_h$ has the
same distance distribution as 
  $\mC$. In particular, $\mC_h$ is extremal. Moreover, since $0 \in \mC_h$,
its weight distribution is given by 
  Theorem~\ref{LAfff}, and it does not depend on $h \in \mC$.
  Therefore for all $h \in \mC$ and  $i \in \{0,...,r\}$ one has  
  $$D_i(\mC,\omega)= \frac{1}{|\mC|} \cdot |\mC| \cdot  W_i(\mC_h,\omega) =
W_i(\mC_h,\omega),$$
as claimed. 
 \end{proof}

 \begin{remark}
Combining Theorem \ref{LAfff} and Corollary \ref{LAccc} we obtain generalizations 
of classical results on the distance distribution of the translates of
cardinality-optimal  
codes to the general context of additive codes in groups.
 \end{remark}

 We conclude this section showing that if $\mC$ is an extremal code 
 (i.e., a subgroup of $G$) and $\sigma$ 
 also satisfies property \ref{tre} of 
 Definition \ref{defsupp0}, then the dual of $\mC$ is an extremal code as well. 
 Note that property \ref{BBB} of Definition \ref{defmoreg}
 is still not required, and thus $\mL$ is not a regular lattice in general.

\begin{lemma} \label{prelmww1}
Let $\mC \subseteq G$ be a code, and assume that 
$\sigma$ also satisfies property \ref{tre} of 
 Definition \ref{defsupp0}.
Take any $S \in L$, and let $s:=\rho_\mL(S)$. We have
$$\displaystyle |\mC_\sigma(S)|=\frac{|\mC| \cdot
|\mC^*_{\sigma^*}(S)|}{\gamma_{\sigma^*}(r-s)}.$$
\end{lemma}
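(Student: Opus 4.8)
The plan is to combine a duality/counting argument for subgroups with part \ref{pa} of Theorem \ref{fund}. First I would observe that $\mC_\sigma(S) = \mC \cap G_\sigma(S)$ is a subgroup of $G$, and that its character-theoretic dual inside $\hat G$ is $(\mC \cap G_\sigma(S))^* = \mC^* + G_\sigma(S)^*$, by part \ref{dimsumb} of Remark \ref{cardual} (applied to the codes $\mC$ and $G_\sigma(S)$, after identifying $G$ with $\doublehat G$ so that $G_\sigma(S)^{**} = G_\sigma(S)$). Then by the cardinality formula of Remark \ref{cardual},
$$|\mC_\sigma(S)| = \frac{|G|}{|\mC^* + G_\sigma(S)^*|}.$$
Now I would compute $|\mC^* + G_\sigma(S)^*|$ using part \ref{dimsuma} of Remark \ref{cardual}:
$$|\mC^* + G_\sigma(S)^*| = \frac{|\mC^*| \cdot |G_\sigma(S)^*|}{|\mC^* \cap G_\sigma(S)^*|}.$$

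Next I would rewrite each factor. By Remark \ref{cardual}, $|\mC^*| = |G|/|\mC|$. By part \ref{pa} of Theorem \ref{fund}, $G_\sigma(S)^* = \hat G_{\sigma^*}(S)$, so $|G_\sigma(S)^*| = |\hat G_{\sigma^*}(S)| = \gamma_{\sigma^*}(\rho_{\mL^*}(S))$; and since $\rho_{\mL^*}(S) = r - s$, this equals $\gamma_{\sigma^*}(r-s)$. For the intersection, again by part \ref{pa}, $\mC^* \cap G_\sigma(S)^* = \mC^* \cap \hat G_{\sigma^*}(S) = \mC^*_{\sigma^*}(S)$ by definition of $\mC^*_{\sigma^*}(S)$. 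Substituting,
$$|\mC^* + G_\sigma(S)^*| = \frac{(|G|/|\mC|) \cdot \gamma_{\sigma^*}(r-s)}{|\mC^*_{\sigma^*}(S)|},$$
and therefore
$$|\mC_\sigma(S)| = \frac{|G|}{|\mC^* + G_\sigma(S)^*|} = \frac{|G| \cdot |\mC| \cdot |\mC^*_{\sigma^*}(S)|}{|G| \cdot \gamma_{\sigma^*}(r-s)} = \frac{|\mC| \cdot |\mC^*_{\sigma^*}(S)|}{\gamma_{\sigma^*}(r-s)},$$
which is the claimed identity.

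The only genuinely delicate point is the first step: verifying that $(\mC \cap G_\sigma(S))^* = \mC^* + G_\sigma(S)^*$ as subgroups of $\hat G$. This is exactly part \ref{dimsumb} of Remark \ref{cardual} read through the identification $G \cong \doublehat G$ and $\mC^{**} = \mC$, $G_\sigma(S)^{**} = G_\sigma(S)$: taking $\mC_1 = \mC^*$, $\mC_2 = G_\sigma(S)^*$ gives $\mC_1^* \cap \mC_2^* = (\mC_1 + \mC_2)^*$, i.e. $\mC \cap G_\sigma(S) = (\mC^* + G_\sigma(S)^*)^*$, and dualizing once more yields the desired equality. Everything else is a routine bookkeeping of cardinalities using Remark \ref{cardual} and part \ref{pa} of Theorem \ref{fund}, together with the rank-function identity $\rho_{\mL^*}(S) = r - \rho_{\mL}(S)$ from Section \ref{seclattic}. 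I expect no real obstacle beyond keeping the chain of substitutions straight.
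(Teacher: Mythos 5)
Your proof is correct and uses essentially the same ingredients as the paper's: the cardinality and duality facts of Remark \ref{cardual} together with part \ref{pa} of Theorem \ref{fund}. The only difference is bookkeeping order --- you dualize the intersection and compute $|\mC^*+G_\sigma(S)^*|$ in $\hat G$, while the paper computes $|G_\sigma(S)+\mC|$ in $G$ and then dualizes that sum, using $|G|/\gamma_\sigma(s)=\gamma_{\sigma^*}(r-s)$ at the end --- so the two arguments are the same computation seen from dual sides.
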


\begin{proof}
By definition, $\mC_\sigma(S) = G_\sigma(S) \cap \mC$. Remark \ref{cardual}
implies 
\begin{equation} \label{pppi}
 |\mC_\sigma(S)|=\frac{|G_\sigma(S)| \cdot |\mC|}{|G_\sigma(S)+\mC|}=
 \frac{|G_\sigma(S)| \cdot |\mC| \cdot 
|(G_\sigma(S)+\mC)^*|}{|G|}.
\end{equation}
Again by Remark \ref{cardual} we have 
$|(G_\sigma(S)+\mC)^*|=|G_\sigma(S)^* \cap \mC^*|=|\hat{G}_{\sigma^*}(S) \cap
\mC^*|$, where the last equality follows from Theorem \ref{fund}
(whose proof does not require property \ref{BBB} of Definition \ref{defmoreg}). 
Since $\hat{G}_{\sigma^*}(S) \cap
\mC^* =\mC^*_{\sigma^*}(S)$ by definition, equation (\ref{pppi}) can be
written as
\begin{equation*}
 |\mC_\sigma(S)|=\frac{|G_\sigma(S)|\cdot |\mC| \cdot |\mC^*_{\sigma^*}(S)|}{|G|}.
\end{equation*}
By Theorem \ref{fund} we have $|G|/|G_\sigma(S)|=
|G|/\gamma_\sigma(s)=\gamma_{\sigma^*}(r-s)$,
and the result follows.
\end{proof}

\begin{theorem} \label{dualextremal}
 Let $\mC \subseteq G$ be a non-trivial extremal code, and assume that 
$\sigma$ also satisfies property~\ref{tre} of 
 Definition~\ref{defsupp0}.
Then $d_{\omega_{\sigma^*}}(\mC^*) \ge r-d_{\omega_{\sigma}}(\mC)+2$, and 
the code $\mC^*$ is extremal.
\end{theorem}

\begin{proof}
Let $d:=d_{\omega_\sigma}(\mC)$ and $d^*:=d_{\omega_{\sigma^*}}(\mC^*)$. 
Since $\mC$ is  extremal, we have $|\mC| = |G|/\gamma_\sigma(d-1)$. 
Remark~\ref{cardual} and Theorem \ref{fund} (whose proof does not require 
property \ref{BBB} of Definition \ref{defmoreg}) imply
\begin{equation} \label{ee}
 |\mC^*|=\gamma_\sigma(d-1)=|\hat{G}|/\gamma_{\sigma^*}(r-d+1).
\end{equation}
Let $S \in L$ be any element with $\rho_{\mL^*}(S) = r-d+1$. 
Then
$\rho_\mL(S)=r-(r-d+1)=d-1$, and so
$\mC_\sigma(S)=\{0\}$. Lemma \ref{prelmww1} gives
$$|\mC^*_{\sigma^*}(S)| = \frac{|\mC_\sigma(S)| \cdot
\gamma_{\sigma^*}(r-d+1)}{|\mC|}
= 1,$$ where the last equality easily follows from equation (\ref{ee}) and
Remark \ref{cardual}.
Therefore $\mC^*_{\sigma^*}(S)=\{0\}$, and the minimum weight/distance
of $\mC^*$
 satisfies
$d^* \ge r-d+2$. In particular, 
$\gamma_{\sigma^*}(d^*-1) \ge \gamma_{\sigma^*}(r-d+1)$.
Combining Proposition \ref{LAsbound} applied to $\mC^*$ and $\sigma^*$
with equation (\ref{ee}) we obtain
\begin{equation*} \label{long}
 \frac{|\hat{G}|}{\gamma_{\sigma^*}(r-d+1)} \ge 
\frac{|\hat{G}|}{\gamma_{\sigma^*}(d^*-1)} \ge 
|\mC^*|=\frac{|\hat{G}|}{\gamma_{\sigma^*}(r-d+1)}.
\end{equation*}
It follows $|\mC^*|=|\hat{G}|/ \gamma_{\sigma^*}(d^*-1)$, i.e.,
$\mC^*$ is extremal.
\end{proof}


\section{Enumerative Problems of Matrices} \label{enmat}

In this section we  show how one can apply
the MacWilliams identities
for the rank weight 
to answer some open enumerative combinatorics questions
on matrices over a finite field. In particular, we answer a 
generalized question of R. Stanley on the number of matrices 
with given rank and zero diagonal entries.

Following the notation of Example \ref{rwh}, in the sequel $k$ and
$m$ are integers with $1 \le k \le m$, 
and $\F_q$ is the finite field with $q$ elements. We denote by $\mbox{Mat}$ the
$km$-dimensional
 space of $k \times m$ matrices over $\F_q$. Given an integer $s \ge 1$, 
we set $[s]:=\{1,...,s\}$. 
The rank support
on $\mbox{Mat}$ is denoted by $\sigma_{\textnormal{rk}}$, 
and $\omega_{\textnormal{rk}}$ is the rank weight. We write 
``rank-distribution'' for ``$\omega_{\textnormal{rk}}$-distribution''. 
 All dimensions in this section 
are computed 
over $\F_q$. 

Recall 
 that the \textbf{trace-product} of 
matrices $M,N \in \mbox{Mat}$ is  
$\langle M,N \rangle := \mbox{Tr}(MN^t)$, where $\mbox{Tr}$ denotes the 
trace of matrices and $t$ denotes transposition. The \textbf{orthogonal} of a
linear code
$\mC \subseteq \mbox{Mat}$ is the linear code
$\mC^\perp=\{ M \in \mbox{Mat} : \langle M,N \rangle =0 \mbox{ for all } N \in
\mC\}$. Notice that for any linear code $\mC$ one has 
$\{M^t : M \in \mC\}^\perp = \{M^t : M \in \mC^\perp\}$.
In particular, up to a transposition of the matrices, the assumption
 $k \le m$ is not restrictive.
 
We start by recalling the 
MacWilliams identities for linear codes endowed with the rank weight 
(see  \cite{del1} or Example \ref{rwh}). 

\begin{theorem} \label{recallmw}
Let $\mC \subseteq \mbox{Mat}$ be a linear code.
The rank-distributions of $\mC$ and $\mC^\perp$ satisfy
 $$W_j(\mC^\perp,\omega_{\textnormal{rk}})=\frac{1}{|\mC|} \sum_{i=0}^k
W_i(\mC,\omega_{\textnormal{rk}})
 \sum_{s=0}^k (-1)^{j-s} \
q^{ms+\binom{j-s}{2}}  \ 
{{k-s} \brack {k-j}} {{k-i} \brack {s}}$$
for all $0 \le j \le k$.
In particular, they determine each other.
\end{theorem}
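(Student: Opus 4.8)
The plan is to obtain Theorem \ref{recallmw} as a specialization of the general results of Sections \ref{secmw} and \ref{recover} to the rank support $\sigma_{\textnormal{rk}}$, the only genuinely new ingredient being the translation between the character-theoretic dual $\mC^*$ and the trace-product orthogonal $\mC^\perp$.

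First I would recall from Example \ref{rwh} that $\sigma_{\textnormal{rk}}:\mbox{Mat} \dashrightarrow \mL$ is a regular support, where $\mL$ is the lattice of subspaces of $\F_q^k$, with rank function $\dim$, invariants $\mu_{\subseteq}(s,t)={t\brack s}$, $\mu_{\supseteq}(s,t)={k-t\brack s-t}$ and $\mu_\mL(s,t)=(-1)^{t-s}q^{\binom{t-s}{2}}$ for $s\le t$, and $\gamma_{\sigma_{\textnormal{rk}}}(s)=q^{ms}$. By Theorem \ref{rpm} the pair $(\omega_{\textnormal{rk}},\omega^*_{\textnormal{rk}})$ is compatible, and substituting these invariants (using Proposition \ref{dualmre} to pass to $\mL^*$) yields the closed form of the Krawtchouk coefficients recorded in equation (\ref{kkr}). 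Theorem \ref{mwwide} then gives, for every code $\mC\subseteq\mbox{Mat}$ and all $0\le j\le k$,
$$W_j(\mC^*,\omega^*_{\textnormal{rk}})=\frac{1}{|\mC|}\sum_{i=0}^k K(\omega_{\textnormal{rk}},\omega^*_{\textnormal{rk}})(i,j)\,W_i(\mC,\omega_{\textnormal{rk}}).$$

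The substantive step is to identify, for a \emph{linear} code $\mC$, the quantities $W_j(\mC^\perp,\omega_{\textnormal{rk}})$ and $W_j(\mC^*,\omega^*_{\textnormal{rk}})$. Fix a nontrivial additive character $\lambda$ of $\F_q$ and set $\psi(M)(N):=\lambda(\langle M,N\rangle)$; then $\psi:\mbox{Mat}\to\widehat{\mbox{Mat}}$ is a group isomorphism, and since $\mC$ is $\F_q$-linear and $\lambda$ is nontrivial it carries $\mC^\perp$ onto $\mC^*$. A short trace-product computation shows that $\psi(M)\in G_{\sigma_{\textnormal{rk}}}(V)^*$ precisely when $V\subseteq\mbox{colsp}(M)^\perp$, so Lemma \ref{tecn} gives $\sigma^*_{\textnormal{rk}}(\psi(M))=\mbox{colsp}(M)^\perp$, whence $\omega^*_{\textnormal{rk}}(\psi(M))=\rho_{\mL^*}(\mbox{colsp}(M)^\perp)=k-\dim\mbox{colsp}(M)^\perp=\mbox{rk}(M)=\omega_{\textnormal{rk}}(M)$. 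Therefore $W_j(\mC^\perp,\omega_{\textnormal{rk}})=W_j(\mC^*,\omega^*_{\textnormal{rk}})$ for all $j$, and combining this with the displayed identity and (\ref{kkr}) gives the stated formula. The ``determine each other'' assertion follows either by applying the same argument to $\mC^\perp$ in place of $\mC$ (using $(\mC^\perp)^\perp=\mC$), or directly from the compatibility of $(\omega^*_{\textnormal{rk}},\omega_{\textnormal{rk}})$ in Theorem \ref{rpm}, which makes the MacWilliams transformation invertible.

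The main obstacle is the linear-algebra verification that $\sigma^*_{\textnormal{rk}}(\psi(M))=\mbox{colsp}(M)^\perp$: one must check both that $\langle M,N\rangle=0$ whenever all columns of $N$ lie in $\mbox{colsp}(M)^\perp$, and that no strictly larger subspace $V$ satisfies $\psi(M)\in G_{\sigma_{\textnormal{rk}}}(V)^*$, so that the maximum in the definition of the dual support is attained exactly at $\mbox{colsp}(M)^\perp$. Everything else is routine bookkeeping with the invariants of $\mL$ already determined in Example \ref{rwh}.
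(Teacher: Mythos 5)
Your argument is correct and follows the same route as the paper: specialize Theorem \ref{rpm} and Theorem \ref{mwwide} to the rank support with the lattice invariants of Example \ref{rwh}, then identify $W_j(\mC^\perp,\omega_{\textnormal{rk}})$ with $W_j(\mC^*,\omega_{\textnormal{rk}}^*)$. The only difference is that the paper leaves that identification as a remark ("one can show"), whereas you carry out the character-theoretic verification via $\psi(M)(N)=\lambda(\langle M,N\rangle)$ and $\sigma^*_{\textnormal{rk}}(\psi(M))=\mbox{colsp}(M)^\perp$, which is exactly the intended argument and is done correctly.
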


The first enumerative technique that we present is based on the following simple observation. If 
$f:\mbox{Mat} \to \F_q$ is a non-zero
$\F_q$-linear function,
then $\mbox{ker}(f)^\perp$ is a linear code  generated by one matrix.
Any two generating matrices have the same rank, say
$R_f$.
Thus the rank distribution of the linear code $\mC:=\mbox{ker}(f)^\perp$ is
$$W_i(\mC,\omega_{\textnormal{rk}})= \left\{ \begin{array}{cl}
                              1 & \mbox{ if $i=0$} \\
q-1 & \mbox{ if $i=R_f$} \\
0 & \mbox{ otherwise}.
                             \end{array}
\right.\ $$
Applying Theorem \ref{recallmw} to $\mC:= \mbox{ker}(f)^\perp$
one can now explicitly compute the number of matrices of rank
$j$ in $\mbox{ker}(f)=\mC^\perp$ for all $0 \le j \le k$.
More precisely, the following hold.

\begin{corollary} \label{kerf}
 Let $f:\mbox{Mat} \to \F_q$ be a non-zero linear map, and let
$R_f$ be the rank of any matrix that generates $\mbox{ker}(f)^\perp$.
For all $0 \le j \le k$ the number of rank $j$ matrices in
$\mbox{ker}(f)$  is
$$\frac{1}{q} \ \sum_{s=0}^k (-1)^{j-s} \ q^{ms+ \binom{j-s}{2}}\ 
{{k-s} \brack {k-j}} \left( {k \brack s} + (q-1) {k-R_f \brack s}\right).$$
\end{corollary}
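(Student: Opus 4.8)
The plan is to apply Theorem \ref{recallmw} to a suitable one-dimensional code. Since $f$ is non-zero, $\mbox{ker}(f)$ has codimension one in $\mbox{Mat}$, so the linear code $\mC:=\mbox{ker}(f)^\perp$ has dimension one over $\F_q$; in particular $|\mC|=q$. A one-dimensional code is generated by a single non-zero matrix, and any two generators differ by a non-zero scalar and therefore have the same rank, namely the integer $R_f$ of the statement (which is at least $1$). Hence the rank-distribution of $\mC$ is $W_0(\mC,\omega_{\textnormal{rk}})=1$, $W_{R_f}(\mC,\omega_{\textnormal{rk}})=q-1$, and $W_i(\mC,\omega_{\textnormal{rk}})=0$ for every other $i$.

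Next I would use that the trace-product is a symmetric non-degenerate bilinear form on $\mbox{Mat}$, so that $\perp$ is an involution on linear codes and $\mC^\perp=\mbox{ker}(f)^{\perp\perp}=\mbox{ker}(f)$. Consequently the number of rank-$j$ matrices in $\mbox{ker}(f)$ equals $W_j(\mC^\perp,\omega_{\textnormal{rk}})$, which is exactly the quantity computed by Theorem \ref{recallmw}.

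Finally I would substitute the rank-distribution of $\mC$ and the value $|\mC|=q$ into the formula of Theorem \ref{recallmw}. The outer sum over $i$ then has only the two non-zero terms $i=0$ and $i=R_f$, which contribute ${k \brack s}$ and $(q-1){k-R_f \brack s}$, respectively, to the coefficient of $(-1)^{j-s}q^{ms+\binom{j-s}{2}}{k-s \brack k-j}$; collecting them inside a single bracket gives
$$\frac{1}{q}\sum_{s=0}^k (-1)^{j-s}\,q^{ms+\binom{j-s}{2}}\,{k-s \brack k-j}\left({k \brack s}+(q-1){k-R_f \brack s}\right),$$
as claimed. I do not expect any real obstacle: the only point deserving a line of justification is the identity $\mC^\perp=\mbox{ker}(f)$, i.e. that orthogonality for the trace-product is an involution, which is standard; the rest is a direct specialization of Theorem \ref{recallmw}.
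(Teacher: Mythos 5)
Your proposal is correct and follows exactly the paper's argument: take $\mC:=\mbox{ker}(f)^\perp$, which is one-dimensional with rank distribution $W_0=1$, $W_{R_f}=q-1$ and zero otherwise, note $\mC^\perp=\mbox{ker}(f)$, and specialize Theorem \ref{recallmw} with $|\mC|=q$. The only difference is that you spell out the involutivity of $\perp$ (via non-degeneracy of the trace-product), which the paper takes for granted.
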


 Let e.g. $f: \mbox{Mat} \to \F_q$ be the linear map that sends a matrix
to the sum of its entries. The orthogonal code of
$\mbox{ker}(f)$ is generated by the matrix whose entries are all ones, which has rank one.
By Corollary \ref{kerf},  for all $0 \le j \le k$ the number of rank $j$ matrices 
over $\F_q$
of size $k \times m$ whose entries sum to zero is
$$\frac{1}{q} \ \sum_{s=0}^k (-1)^{j-s} \ q^{ms+ \binom{j-s}{2}}\ 
{{k-s} \brack {k-j}} \left( {k \brack s} + (q-1) {k-1 \brack s}\right).$$
Generalizing the previous argument one obtains the following.

\begin{corollary} \label{lastfo}
 Let $I \subseteq [k] \times [m]$ be a non-zero set
of indices. For all $0 \le j \le k$ the number of $k \times m$ rank $j$
matrices 
$M$ over $\F_q$ such that $\sum_{(s,t) \in I} M_{st}=0$ is
$$\frac{1}{q} \ \sum_{s=0}^k (-1)^{j-s} \ q^{ms+ \binom{j-s}{2}}\ 
{{k-s} \brack {k-j}} \left( {k \brack s} + (q-1) {k-\mbox{rk} (M(I)) \brack
s}\right),$$
where $M(I)$ denotes the $k \times m$ matrix defined,
for all $(s,t) \in [k] \times [m]$, by
$M(I)_{st}:= 1$ if $(s,t) \in I$, and 
$M(I)_{st}:= 0$ otherwise.
\end{corollary}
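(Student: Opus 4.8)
The plan is to reduce the statement to Corollary \ref{kerf} by realizing the prescribed linear condition as the kernel of an explicit non-zero linear functional. First I would introduce $f:\mbox{Mat}\to\F_q$ defined by $f(M):=\sum_{(s,t)\in I} M_{st}$. This map is $\F_q$-linear, and it is non-zero precisely because $I$ is non-empty: the matrix that equals $1$ at one chosen entry of $I$ and $0$ elsewhere is not sent to $0$. Thus $f$ satisfies the hypotheses of Corollary \ref{kerf}, and by construction $\{M\in\mbox{Mat} : \sum_{(s,t)\in I} M_{st}=0\}=\mbox{ker}(f)$.

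Next I would compute $\mbox{ker}(f)^\perp$. Writing $M(I)$ for the $0/1$ matrix of the statement, one has $f(M)=\sum_{(s,t)} M_{st}\,M(I)_{st}=\mbox{Tr}(M\,M(I)^t)=\langle M,M(I)\rangle$ for every $M\in\mbox{Mat}$. Hence $\mbox{ker}(f)$ is the orthogonal, with respect to the trace-product, of the one-dimensional linear code $\F_q\cdot M(I)$ spanned by $M(I)$. Since the trace-product is non-degenerate, double orthogonal is the identity on linear codes, so $\mbox{ker}(f)^\perp=\F_q\cdot M(I)$. In particular $\mbox{ker}(f)^\perp$ is generated by the single matrix $M(I)$, and therefore $R_f=\mbox{rk}(M(I))$ in the notation of Corollary \ref{kerf}.

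Finally, substituting $R_f=\mbox{rk}(M(I))$ into the formula of Corollary \ref{kerf} yields exactly the count in the statement, and the proof is complete. I do not anticipate a genuine obstacle here: the argument is a short reduction, and the only step deserving a moment's care is the identity $f=\langle\,\cdot\,,M(I)\rangle$, from which the description of $\mbox{ker}(f)^\perp$ follows using the non-degeneracy of the trace-product already exploited in Example \ref{rwh}. As a consistency check, taking $I=[k]\times[m]$ makes $M(I)$ the all-ones matrix, which has rank $1$, and the formula specializes to Example \ref{exfin}.
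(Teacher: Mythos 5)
Your proposal is correct and follows exactly the route the paper intends (the paper leaves the proof implicit, as a direct extension of Example \ref{exfin}): realize the condition as $\mbox{ker}(f)$ for $f=\langle\,\cdot\,,M(I)\rangle$, so that $\mbox{ker}(f)^\perp$ is spanned by $M(I)$ and $R_f=\mbox{rk}(M(I))$, then apply Corollary \ref{kerf}. No gaps; the non-degeneracy argument for $\mbox{ker}(f)^\perp=\F_q\cdot M(I)$ and the sanity check against Example \ref{exfin} are exactly the right details to record.
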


The computation of the number of 
matrices over $\F_q$ with given size, rank 
and 
zero entries in a prescribed
region  is an active research area in combinatorics and combinatorial statistics
(see, among others \cite{haglund}, \cite{counting}, \cite{molti}, \cite{stemb} and the
references therein).
Such matrices can be regarded as $q$-analogues of permutations with restricted
positions.
It turns out that some instances of this type of enumeration problems can be investigated
using
MacWilliams identities
for the rank support, as we now show.

Let us first fix a convenient notation. The complement of
a set $I \subseteq [k] \times [m]$ is denoted by $I^c$.
For $I \subseteq [k] \times [m]$ define
$\mbox{Mat}[I]:= \{ M \in \mbox{Mat}: M_{st}=0 \mbox{ for all } 
(s,t) \in  I^c \}$.
Clearly, $\mbox{Mat}[I]$ is an $\F_q$-subspace of $\mbox{Mat}$ of dimension
$|I|$.

\begin{remark} \label{facileok}
For any subset $I \subseteq [k] \times [m]$
we have $\mbox{Mat}[I]^\perp = \mbox{Mat}[I^c]$.
Therefore by Theorem \ref{recallmw} the rank distributions of
$\mbox{Mat}[I]$ and $\mbox{Mat}[I^c]$ determine each other.
\end{remark}

For some sets $I$, the rank distribution
of $\mbox{Mat}[I]$ can be explicitly computed.
In these cases Theorem~\ref{recallmw} gives a formula
for the number of matrices in $\mbox{Mat}$ of any rank and zero entries on $I$.

\begin{corollary} \label{Yo}
 Let $1 \le k' \le k$ and $1 \le m' \le m$ be integers. For all
$0 \le j \le k$ the number of $k \times m$ rank $j$ matrices $M$ over $\F_q$
such that $M_{st}=0$ for all $(s,t) \in [k'] \times [m']$
is
$$q^{-k'm'} \ \sum_{i=0}^{\min\{ k',m'\}} 
{m' \brack i} \ \prod_{u=0}^{i-1} (q^{k'}-q^u)
 \sum_{s=0}^k (-1)^{j-s} \
q^{ms+\binom{j-s}{2}}  \ 
{{k-s} \brack {k-j}} {{k-i} \brack {s}}.$$
\end{corollary}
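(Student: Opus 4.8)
The plan is to exhibit the matrices vanishing on the prescribed rectangle as the orthogonal of a code whose rank distribution can be written down directly, and then to read off the count from the MacWilliams identities of Theorem~\ref{recallmw}.

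First I would set $I:=\{1,\dots,k'\}\times\{1,\dots,m'\}\subseteq[k]\times[m]$ and let $\mC:=\mbox{Mat}[I^c]$, so that $\mC$ is precisely the space of $k\times m$ matrices $M$ with $M_{st}=0$ for all $(s,t)\in\{1,\dots,k'\}\times\{1,\dots,m'\}$; the number we want is $W_j(\mC,\omega_{\textnormal{rk}})$. By Remark~\ref{facileok} its orthogonal is $\mC^\perp=\mbox{Mat}[I]$, the space of matrices supported on the top-left $k'\times m'$ block, and $|\mC^\perp|=q^{|I|}=q^{k'm'}$.

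Next I would compute the rank distribution of $\mC^\perp=\mbox{Mat}[I]$. A matrix in $\mbox{Mat}[I]$ is zero outside its $k'\times m'$ block, hence its rank equals the rank of that block, and restriction to the block is a rank-preserving bijection between $\mbox{Mat}[I]$ and the space of all $k'\times m'$ matrices over $\F_q$. Therefore $W_i(\mC^\perp,\omega_{\textnormal{rk}})$ is the classical number of $k'\times m'$ matrices of rank $i$ over $\F_q$: choosing the column space (an $i$-dimensional subspace of $\F_q^{k'}$) and then a surjection of $\F_q^{m'}$ onto it, or simply transposing and counting the other way, gives
$$W_i(\mC^\perp,\omega_{\textnormal{rk}})={m' \brack i}\ \prod_{u=0}^{i-1}\bigl(q^{k'}-q^u\bigr)$$
for $0\le i\le\min\{k',m'\}$, and $W_i(\mC^\perp,\omega_{\textnormal{rk}})=0$ otherwise.

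Finally I would apply Theorem~\ref{recallmw} to the linear code $\mC^\perp$ rather than to $\mC$: since $(\mC^\perp)^\perp=\mC$, the theorem yields
$$W_j(\mC,\omega_{\textnormal{rk}})=\frac{1}{|\mC^\perp|}\sum_{i=0}^{k}W_i(\mC^\perp,\omega_{\textnormal{rk}})\sum_{s=0}^k(-1)^{j-s}\,q^{ms+\binom{j-s}{2}}\,{{k-s}\brack{k-j}}{{k-i}\brack{s}},$$
and substituting $|\mC^\perp|=q^{k'm'}$ together with the formula for $W_i(\mC^\perp,\omega_{\textnormal{rk}})$ above (which vanishes for $i>\min\{k',m'\}$) gives exactly the asserted expression. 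There is no serious obstacle here; the only points requiring care are to apply the MacWilliams transform in the correct direction (to $\mbox{Mat}[I]$, whose rank distribution is known, and not to $\mbox{Mat}[I^c]$) via Remark~\ref{facileok}, and to record the classical rank-enumeration formula for $k'\times m'$ matrices in the form matching the statement.
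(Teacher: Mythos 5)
Your proposal is correct and follows essentially the same route as the paper: identify $\mbox{Mat}[I]$ with the space of $k'\times m'$ matrices to get its rank distribution via the classical count ${m'\brack i}\prod_{u=0}^{i-1}(q^{k'}-q^u)$, use $\mbox{Mat}[I]^\perp=\mbox{Mat}[I^c]$ from Remark \ref{facileok}, and apply Theorem \ref{recallmw}. The only difference is notational (you call $\mC$ what the paper calls $\mC^\perp$), which does not affect the argument.
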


\begin{proof}
 Let $I:=[k'] \times [m']$. The code $\mC:=\mbox{Mat}[I]$
is the set of matrices whose entries are contained in the
rectangular region
described by $I$. As a consequence, for all  $0 \le i \le
\min\{k',m'\}$,
 $W_i(\mC,\omega_{\textnormal{rk}})$ is the number of  $k' \times
m'$ matrices
 over $\F_q$ with rank $i$, i.e,
$$W_i(\mC,\omega_{\textnormal{rk}})= {m' \brack i} \ \prod_{u=0}^{i-1}
(q^{k'}-q^u)
\ \ \mbox{ for $0 \le i \le \min \{ k',m'\}$.}$$
For $\min \{ k',m'\} < i \le k'$ we have
$W_i(\mC,\omega_{\textnormal{rk}})=0$. 
The result immediately follows from Remark \ref{facileok} and Theorem
\ref{recallmw}.
\end{proof}

Up to a permutation of rows and columns, the matrices of Corollary \ref{Yo} 
 have all their non-zero entries contained in a Ferrers board.
Matrices with this property have been widely studied in the literature 
(see \cite{haglund} among others).

Again concerning matrices with prescribed zero entries, a question of R. Stanley
asks for the number of invertible matrices 
over $\F_q$ having zero diagonal entries (see the Introduction of \cite{molti}).
The question was answered in \cite[Proposition 2.2]{molti}, where the authors
provide a formula for the number of $k \times m$ full-rank matrices over $\F_q$
with zero diagonal entries. Notice that for diagonal
entries
of a rectangular  matrix $M$  we mean the entries of the form
$M_{ss}$ for
$1 \le s \le k$.

 The following corollary
generalizes Proposition 2.2 of \cite{molti} with a simple proof 
based on MacWilliams identities.

\begin{corollary} \label{diaa}
 Let $I \subseteq \{ (s,t) \in [k] \times [m] : s=t \}$ be a set of diagonal
entries.
For all $0 \le j \le k$ the number of $k \times m$ matrices $M$ over $\F_q$
having rank $j$
and $M_{st}=0$ for all $(s,t) \in I$ is
$$q^{-|I|} \ \sum_{i=0}^{|I|} 
\binom{|I|}{i} (q-1)^i
 \sum_{s=0}^k (-1)^{j-s} \
q^{ms+\binom{j-s}{2}}  \ 
{{k-s} \brack {k-j}} {{k-i} \brack {s}}.$$
\end{corollary}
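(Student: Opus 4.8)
The plan is to apply the MacWilliams identities of Theorem \ref{recallmw} to a conveniently chosen linear code, exactly in the spirit of the proofs of Corollary \ref{Yo} and Corollary \ref{lastfo}. I would set $\mC := \mbox{Mat}[I]$, the space of $k \times m$ matrices over $\F_q$ whose nonzero entries are confined to the diagonal positions in $I$. By Remark \ref{facileok} we have $\mC^\perp = \mbox{Mat}[I]^\perp = \mbox{Mat}[I^c]$, and an element of $\mbox{Mat}[I^c]$ is precisely a matrix $M$ with $M_{st}=0$ for all $(s,t) \in I$. Hence $W_j(\mC^\perp,\omega_{\textnormal{rk}})$ is exactly the number we want, and it suffices to determine the rank-distribution of $\mC$ and then invoke Theorem \ref{recallmw}.

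First I would compute the rank-distribution of $\mC = \mbox{Mat}[I]$. A matrix in $\mbox{Mat}[I]$ is determined by its $|I|$ entries in the positions of $I$, each ranging freely over $\F_q$; in particular $|\mC| = q^{|I|}$. Since $I$ consists of diagonal positions $(s,s)$, these positions occupy pairwise distinct rows and pairwise distinct columns, so for $M \in \mbox{Mat}[I]$ the rank of $M$ equals the number of its nonzero entries. Consequently, for $0 \le i \le |I|$ the number of rank $i$ matrices in $\mC$ is the number of ways to choose the $i$ nonzero positions among the $|I|$ available ones and assign each a nonzero value, namely $\binom{|I|}{i}(q-1)^i$; and $W_i(\mC,\omega_{\textnormal{rk}}) = 0$ for $i > |I|$. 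Note that $|I| \le k$, so this accounts for all possible ranks.

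Finally, substituting $|\mC| = q^{|I|}$ and $W_i(\mC,\omega_{\textnormal{rk}}) = \binom{|I|}{i}(q-1)^i$ into the identity of Theorem \ref{recallmw}, the outer sum over $i$ from $0$ to $k$ truncates to $0 \le i \le |I|$, yielding
$$W_j(\mC^\perp,\omega_{\textnormal{rk}}) = q^{-|I|} \sum_{i=0}^{|I|} \binom{|I|}{i}(q-1)^i \sum_{s=0}^k (-1)^{j-s}\, q^{ms + \binom{j-s}{2}}\, {{k-s} \brack {k-j}} {{k-i} \brack {s}},$$
which is the claimed expression. The argument is essentially mechanical once the code $\mbox{Mat}[I]$ has been identified; the only point that needs a moment's care is the rank computation in the second step, i.e.\ the observation that a matrix supported on a set of diagonal positions has rank equal to the number of its nonzero entries. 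This is precisely where the hypothesis $I \subseteq \{(s,t) : s=t\}$ is used: for a general set of positions the rank-distribution of $\mbox{Mat}[I]$ need not have such a simple closed form.
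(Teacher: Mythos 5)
Your proposal is correct and follows essentially the same route as the paper: take $\mC=\mbox{Mat}[I]$, observe $W_i(\mC,\omega_{\textnormal{rk}})=\binom{|I|}{i}(q-1)^i$ for $0\le i\le|I|$ and $0$ otherwise, and conclude via Remark \ref{facileok} and Theorem \ref{recallmw}. Your justification of the rank-distribution (diagonal positions lie in distinct rows and columns, so the rank equals the number of nonzero entries) is a correct elaboration of a step the paper leaves implicit.
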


\begin{proof}
 Define $\mC:=\mbox{Mat}[I]$. For $|I| < i \le k$ we have $W_i(\mC,\omega_{\textnormal{rk}})=0$, and
for $0 \le i \le |I|$ we have 
$$W_i(\mC,\omega_{\textnormal{rk}})=\binom{|I|}{i} (q-1)^i.$$
Therefore the
formula  follows
from Remark \ref{facileok} and Theorem \ref{recallmw}.
\end{proof}

We conclude this section mentioning a very concise method to compute the
number of
symmetric and skew-symmetric $k \times k$ matrices of given rank over $\F_q$.
Different formul{\ae} for the same numbers were given by Carlitz in
\cite{carlitz1} and \cite{carlitz2} and by
 MacWilliams in \cite{mworth} using quite involved recursive arguments.
Our technique employs the M\"{o}bius inversion formula and the 
 regularity of the lattice of subspaces of $\F_q^k$, which we denote by $\mL$
in the sequel (see Example \ref{rwh}).

 Recall that a $k \times k$ matrix $M$ is \textbf{symmetric} if $M_{ij}=M_{ji}$ for all $1 \le i,j \le k$
  and \textbf{skew-symmetric} if $M_{ii}=0$ and $M_{ij}=-M_{ji}$ for all $1 \le
 i,j \le k$. 
  We denote by
 $\mbox{Sym}$ and $\mbox{s-Sym}$ the spaces of
 $k \times k$ symmetric and skew-symmetric matrices over $\F_q$,
 respectively.

\begin{lemma} \label{cg}
 Let $S \subseteq \F_q^k$ be any $s$-dimensional subspace. Then
  $\{M \in \mbox{Sym} : \sigma_{\textnormal{rk}}(M) \subseteq S\}$ has dimension
 $s(s+1)/2$ over $\F_q$. 
\end{lemma}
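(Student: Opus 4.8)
The plan is to reduce to the case where $S$ is a coordinate subspace by a congruence transformation, and then count free entries directly. Throughout write $V(S):=\{M \in \mbox{Sym} : \sigma_{\textnormal{rk}}(M) \subseteq S\}$; recalling from Example \ref{rwh} that $\sigma_{\textnormal{rk}}(M)=\mbox{colsp}(M)$, this is visibly an $\F_q$-subspace of $\mbox{Sym}$, and the task is to show $\dim_{\F_q} V(S)=s(s+1)/2$ whenever $\dim_{\F_q} S = s$.

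The first step is to observe that for any invertible $k \times k$ matrix $A$ over $\F_q$, the map $M \mapsto A M A^t$ is an $\F_q$-linear automorphism of $\mbox{Sym}$ (symmetry is preserved, since $(AMA^t)^t = AM^tA^t = AMA^t$), and it satisfies $\mbox{colsp}(AMA^t)=A\cdot\mbox{colsp}(MA^t)=A\cdot\mbox{colsp}(M)$ because $A^t$ is invertible. Hence this automorphism restricts to a linear isomorphism between $V(S)$ and $V(AS)$, so $\dim_{\F_q} V(S)$ depends only on $\dim_{\F_q}S = s$. Choosing $A$ with $AS=\langle e_1,\dots,e_s\rangle$, it suffices to compute $\dim_{\F_q} V(\langle e_1,\dots,e_s\rangle)$.

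The second step is that computation. For a matrix $N$, the condition $\mbox{colsp}(N)\subseteq\langle e_1,\dots,e_s\rangle$ says exactly that rows $s+1,\dots,k$ of $N$ vanish; if $N$ is also symmetric, then $N=N^t$ forces columns $s+1,\dots,k$ to vanish as well. Conversely, any symmetric matrix all of whose nonzero entries lie in the top-left $s\times s$ block clearly belongs to $V(\langle e_1,\dots,e_s\rangle)$. Thus this space is exactly the space of symmetric $k\times k$ matrices supported on the top-left $s\times s$ block, which is naturally identified with the space of symmetric $s\times s$ matrices over $\F_q$; its dimension is the number of pairs $(i,j)$ with $1\le i\le j\le s$, namely $s(s+1)/2$. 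This proves the lemma.

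I do not expect a real obstacle here; the one point requiring care is that the reduction must be performed by the congruence $M\mapsto AMA^t$ rather than by the conjugation $M\mapsto A^{-1}MA$ — conjugation transforms the column space correctly but does not preserve symmetry — and one should verify, as above, that congruence does act on the column space in the expected way, which is precisely where invertibility of $A^t$ enters.
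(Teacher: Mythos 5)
Your proof is correct and follows essentially the same route as the paper: reduce via the congruence $M\mapsto AMA^t$ (with $A$ invertible carrying $S$ to the coordinate subspace $\langle e_1,\dots,e_s\rangle$) and then count the symmetric matrices supported on the top-left $s\times s$ block. Your explicit verification that $\mbox{colsp}(AMA^t)=A\cdot\mbox{colsp}(M)$ just spells out what the paper states more tersely, so there is nothing to add.
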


\begin{proof}
 Define $V:=\{
x \in \F_q^k : x_i=0 \mbox{ for } i >s \} \subseteq \F_q^k$. There exists an
$\F_q$-isomorphism $g:\F_q^k \to \F_q^k$ such that $g(S)=V$.
Let $G \in
\mbox{GL}_k(\F_q)$ be the matrix associated to $g$ with
respect to the canonical
basis $\{e_1,...,e_k\}$ of $\F_q^k$.
Since $G$ is invertible, the map $M \mapsto GMG^t$ is an $\F_q$-linear isomorphism 
$\{M \in \mbox{Sym} :
\sigma_{\textnormal{rk}}(M) \subseteq S\} \to 
\{M \in \mbox{Sym} : \sigma_{\textnormal{rk}}(M) \subseteq V\}$.
The lemma now follows from the fact that 
$\dim (\{M \in \mbox{Sym} : \sigma_{\textnormal{rk}}(M) \subseteq V\})=s(s+1)/2$.
\end{proof}

We can now compute the number  of symmetric 
 $k \times k$ matrices over $\F_q$ of rank $i$ as follows.
For any subspace $T \subseteq \F_q^k$ define
$f(T):= |\{ M \in \mbox{Sym} : \sigma_{\textnormal{rk}}(M)=T\}|$ and
$g(T):= \sum_{ S \subseteq T}  f(S)$.
By Lemma \ref{cg}, for all $S \subseteq \F_q^k$ we have
$g(S)=q^{s(s+1)/2}$, where $s:= \dim(S)$.
Therefore applying the M\"{o}bius inversion formula  (\cite{ec},
Proposition
3.7.1) to the functions $f$ and $g$ we obtain, for
any given $i$-dimensional subspace $T \subseteq \F_q^k$,
$$
 f(T) \ = \ \sum_{\substack{S \subseteq  T}} g(S) \ \mu_\mL(S,T) \ = \
\sum_{s=0}^k \sum_{\substack{S \subseteq T \\ \dim(S)=s}}
q^{s(s+1)/2} \ \mu_\mL(s,i) \ = \ 
 \sum_{s=0}^k
q^{\binom{s+1}{2}} {i \brack s} (-1)^{i-s} q^{\binom{i-s}{2}}.
$$
The expected result is now derived summing over all the $i$-dimensional
subspaces $T \subseteq \F_q^k$.
A similar argument applies to skew-symmetric matrices. The final result is the following.

\begin{proposition}
 The number of symmetric  and skew-symmetric 
 $k \times k$ matrices over $\F_q$ of rank $i$ is, respectively, 
$${k \brack i} \ 
 \sum_{s=0}^k (-1)^{i-s}
 q^{\binom{s+1}{2} + \binom{i-s}{2}} {i \brack s}, \ \ \ \ \ \ \ 
 \ \ \ \ \ \ {k \brack i} \ 
 \sum_{s=0}^k (-1)^{i-s}
 q^{\binom{s}{2} + \binom{i-s}{2}} {i \brack s}.$$
\end{proposition}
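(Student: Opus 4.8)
The plan is to essentially repeat, for skew-symmetric matrices, the M\"{o}bius-inversion argument that the preceding paragraphs carried out for symmetric matrices, and then simply record both closed formulas. First I would observe that Lemma \ref{cg} has an exact analogue for skew-symmetric matrices: if $S \subseteq \F_q^k$ is an $s$-dimensional subspace, then $\{ M \in \mbox{s-Sym} : \sigma_{\textnormal{rk}}(M) \subseteq (S)\}$ is an $\F_q$-subspace of dimension $\binom{s}{2}=s(s-1)/2$. The proof is the same: choose $G \in \mbox{GL}_k(\F_q)$ carrying $S$ to the coordinate subspace $V=\{x : x_i=0 \text{ for } i>s\}$, note that $M \mapsto GMG^t$ is an $\F_q$-linear automorphism of $\mbox{s-Sym}$ preserving the rank support, and count the entries above the diagonal of a skew-symmetric $s \times s$ block, of which there are $\binom{s}{2}$.

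Next I would set up the M\"{o}bius inversion exactly as in the symmetric case. For a subspace $T \subseteq \F_q^k$ put $f(T):=|\{M \in \mbox{s-Sym} : \sigma_{\textnormal{rk}}(M)=T\}|$ and $g(T):=\sum_{S \subseteq T} f(S)=|\{M \in \mbox{s-Sym} : \sigma_{\textnormal{rk}}(M) \subseteq T\}|$. By the skew analogue of Lemma \ref{cg}, $g(S)=q^{\binom{s}{2}}$ whenever $\dim(S)=s$. Applying the M\"{o}bius inversion formula (\cite{ec}, Proposition 3.7.1) over the lattice $\mL$ of subspaces of $\F_q^k$, and using $\mu_\mL(s,i)=(-1)^{i-s}q^{\binom{i-s}{2}}$ together with $\mu_{\subseteq}(s,i)={i \brack s}$ from Example \ref{rwh}, I get, for any fixed $i$-dimensional $T$,
$$f(T) = \sum_{S \subseteq T} g(S) \ \mu_\mL(S,T) = \sum_{s=0}^k {i \brack s} \, q^{\binom{s}{2}} (-1)^{i-s} q^{\binom{i-s}{2}}.$$
Since this value is independent of $T$, summing over the ${k \brack i}$ subspaces $T$ of dimension $i$ yields the stated skew-symmetric count. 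The symmetric count is already derived in the text immediately before the proposition, so I would just restate it to complete the proof.

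There is essentially no obstacle of a conceptual nature here; the only point requiring a moment of care is the skew analogue of Lemma \ref{cg}, and specifically the dimension count $\binom{s}{2}$ rather than $\binom{s+1}{2}$ --- this is where the shift from $q^{\binom{s+1}{2}}$ to $q^{\binom{s}{2}}$ in the final formula comes from, and it reflects that a skew-symmetric matrix supported on an $s$-dimensional coordinate subspace has free parameters only strictly above the diagonal (the diagonal entries vanish). I would also note in passing that when $q$ has characteristic $2$ the notions of symmetric and skew-symmetric (alternating) matrices must be handled with the usual care, but since the statement and the symmetric computation already in the text treat the general $\F_q$ case, I would follow the same conventions and not dwell on this. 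Everything else --- the automorphism $M \mapsto GMG^t$ preserving ranks, the M\"{o}bius values, the final summation over $T$ --- is routine and parallels the symmetric case verbatim.
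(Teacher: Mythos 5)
Your proposal is correct and follows essentially the same route as the paper: the paper derives the symmetric count via Lemma \ref{cg} and M\"{o}bius inversion over the subspace lattice, and then simply remarks that ``a similar argument applies to skew-symmetric matrices,'' which is exactly the skew analogue (with dimension $\binom{s}{2}$ in place of $\binom{s+1}{2}$) that you spell out. Your added remarks on the rank-support-preserving automorphism $M \mapsto GMG^t$ and on characteristic $2$ are consistent with the paper's conventions.
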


One can also observe that the spaces of $k \times k$ symmetric and 
skew-symmetric matrices 
over $\F_q$ are orthogonal to each other. Therefore the rank 
distributions of symmetric and skew-symmetric matrices
are related by a MacWilliams transformation. More precisely, the following hold.

\begin{corollary} \label{finalss}
  For all integers $0 \le j \le k$ we have
  $$W_j(\mbox{Sym},\omega_{\textnormal{rk}})={q^{-\binom{k}{2}}}
 \sum_{i=0}^k
 W_i(\mbox{s-Sym},\omega_{\textnormal{rk}})
  \sum_{s=0}^k (-1)^{j-s} \
 q^{ks+\binom{j-s}{2}}  \ 
 {{k-s} \brack {k-j}} {{k-i} \brack {s}}.$$
 \end{corollary}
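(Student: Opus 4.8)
\textbf{Proof plan for Corollary \ref{finalss}.}
The plan is to obtain the identity as a direct specialisation of the rank-metric MacWilliams transform (Theorem \ref{recallmw}) applied to the linear code $\mC := \mbox{s-Sym} \subseteq \mbox{Mat}$, taking $m = k$ since we are dealing with square matrices. The single nontrivial ingredient that must be supplied is the orthogonality relation $\mC^\perp = \mbox{Sym}$, i.e. $\mbox{s-Sym}^\perp = \mbox{Sym}$ with respect to the trace-product; once this is established, the corollary follows by substitution, using $|\mC| = q^{\binom{k}{2}}$.

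First I would check that every symmetric matrix is trace-orthogonal to every skew-symmetric one. For $M \in \mbox{Sym}$ and $N \in \mbox{s-Sym}$, using $N^t = -N$ one has $\langle M,N\rangle = \mbox{Tr}(MN^t) = -\mbox{Tr}(MN) = -\sum_{i,j} M_{ij}N_{ji}$. The diagonal terms vanish because $N_{ii}=0$, and for $i \neq j$ the pair of terms indexed by $(i,j)$ and $(j,i)$ cancels, since $M_{ij}=M_{ji}$ and $N_{ij}=-N_{ji}$. Hence $\langle M,N\rangle = 0$, and this computation is uniform in $q$ (the characteristic-two case included, where $\mbox{s-Sym} \subseteq \mbox{Sym}$ but the trace form degenerates on $\mbox{Sym}$). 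This shows $\mbox{s-Sym} \subseteq \mbox{Sym}^\perp$, equivalently $\mbox{Sym} \subseteq \mbox{s-Sym}^\perp$.

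Next I would close this inclusion by dimension count. One has $\dim_{\F_q}\mbox{Sym} = \binom{k+1}{2}$ and $\dim_{\F_q}\mbox{s-Sym} = \binom{k}{2}$, so the two dimensions sum to $k^2 = \dim_{\F_q}\mbox{Mat}$. Since the trace-product is a non-degenerate bilinear form on $\mbox{Mat}$, $\dim_{\F_q}\mbox{s-Sym}^\perp = k^2 - \binom{k}{2} = \binom{k+1}{2}$, which equals $\dim_{\F_q}\mbox{Sym}$; combined with the inclusion above this forces $\mbox{s-Sym}^\perp = \mbox{Sym}$ (and dually $\mbox{Sym}^\perp = \mbox{s-Sym}$).

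Finally I would apply Theorem \ref{recallmw} to $\mC = \mbox{s-Sym}$, whose orthogonal is $\mbox{Sym}$, and insert $|\mC| = |\mbox{s-Sym}| = q^{\binom{k}{2}}$ (a skew-symmetric $k \times k$ matrix is freely determined by its strictly upper-triangular entries). The resulting formula for $W_j(\mbox{Sym},\omega_{\textnormal{rk}})$ in terms of the $W_i(\mbox{s-Sym},\omega_{\textnormal{rk}})$ is exactly the asserted one, with prefactor $1/|\mC| = q^{-\binom{k}{2}}$ and $m = k$ in the exponent $q^{ms}$. I do not expect a genuine obstacle here; the only points that deserve a moment's care are the all-characteristic verification of trace-orthogonality and the bookkeeping that Theorem \ref{recallmw} is being specialised to the square case $m=k$.
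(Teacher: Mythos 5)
Your proposal is correct and follows exactly the paper's route: the paper simply observes that $\mbox{Sym}$ and $\mbox{s-Sym}$ are orthogonal complements under the trace-product and then applies the rank-metric MacWilliams identity (Theorem \ref{recallmw}) with $\mC=\mbox{s-Sym}$, $|\mC|=q^{\binom{k}{2}}$ and $m=k$. Your trace computation and dimension count just make explicit the verification the paper leaves to the reader, including the characteristic-two case.
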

 

\section*{Acknowledgement}
The author is grateful to Elisa Gorla, Frank R. Kschischang, and the 
Referees of this paper for help in improving Section
\ref{recover} and the  
presentation of this work. 

\vspace{4em}

\end{document}